\numberwithin{equation}{section}
\newtheorem*{proposition*}{Proposition}
\newtheorem*{theorem*}{Theorem}
\newtheorem*{conjecture*}{Conjecture}
\newtheorem*{claim*}{Claim}
\newtheorem*{lemma*}{Lemma}
\newtheorem*{corollary*}{Corollary}
\newtheorem{theorem}{Theorem}[section]
\newtheorem{proposition}[theorem]{Proposition}
\newtheorem{lemma}[theorem]{Lemma}
\newtheorem*{definition*}{Definition}
\newtheorem*{assumption*}{\mathcal{A}ssumption}
\newtheorem*{remark*}{Remark}
\newtheorem{remark}{Remark}[section]
\newcommand{\R}{\mathbb{R}}
\begin{document}

\markboth{Yannis Angelopoulos, Stefanos Aretakis and Dejan Gajic}
{Logarithmic corrections in the asymptotic expansion for the radiation field along null infinity}

%
%

\title{Logarithmic corrections in the asymptotic expansion for the radiation field along null infinity}

\author[1]{Yannis Angelopoulos \thanks {yannis@math.ucla.edu}}
\author[2]{Stefanos Aretakis\thanks {aretakis@math.toronto.edu}}
\author[3]{Dejan Gajic \thanks {D.Gajic@dpmms.cam.ac.uk}}
	\affil[1]{\small Department of Mathematics, University of California, Los Angeles, CA 90095, United States}
	\affil[2]{\small Department of Mathematics, University of Toronto, 40 St George Street, Toronto, ON, Canada}
	\affil[3]{\small Centre for Mathematical Sciences, University of Cambridge, Wilberforce Road, Cambridge CB3 0WB, UK}
	\date{}




\maketitle


\begin{abstract}
{\bfseries Abstract.}\quad 
We obtain the second-order late-time asymptotics for the radiation field of solutions to the wave equation on spherically symmetric and asymptotically flat backgrounds including the Schwarzschild and sub-extremal Reissner--Nordstr\"{o}m families of black hole spacetimes. 
These terms appear as logarithmic corrections to the leading-order asymptotic terms which were rigorously derived in our previous work. 
Such corrections have been heuristically and numerically derived in the physics literature in the case of a non-vanishing Newman--Penrose constant. In this case, our results provide a rigorous confirmation of the existence of these corrections. On the other hand, the precise logarithmic corrections for spherically symmetric compactly supported initial data (and hence, with a vanishing Newman--Penrose constant) explicitly obtained here appear to be new.

\end{abstract}


\section {Introduction}

\subsection{Introduction and background}
\label{sec:IntroductionAndBackground}

This paper obtains second-order late-time asymptotics for spherically symmetric solutions to the linear wave equation 
\begin{equation}
\Box_g\psi=0
\label{we}
\end{equation}
on a class of 4-dimensional spherically symmetric, stationary and asymptotically flat spacetimes $(\mathcal{M},g)$ which include Schwarzschild and sub-extremal Reissner--Nordstr\"{o}m backgrounds as special cases. Higher-order late-time asymptotics are relevant to the following problems in general relativity: 
\begin{enumerate}
	\item strong cosmic censorship, \item propagation of gravitational waves, \item  stability problems (e.g. for black holes), \item  the long-time behavior of the Einstein equations.  

\end{enumerate}

The seminal heuristic work of Price \cite{price1972} from 1972 suggests that solutions $\psi$ to the wave equation on Schwarzschild backgrounds $(\mathcal{M}_M,g_M), M>0$ arising from smooth compactly supported initial data satisfy the following late-time polynomial law
\begin{equation}
\psi(t,r_0,\theta,\varphi)\sim \frac{1}{t^3}
\label{price}
\end{equation}
as $t\rightarrow \infty$ along constant $r=r_0>2M$ hypersurfaces. Subsequently, Leaver \cite{leaver}  and Gundlach--Price--Pullin \cite{CGRPJP94b} suggested the following asymptotic behavior for the radiation field $r\psi$ along the future null infinity $\mathcal{I}^{+}=\{ (u,r=\infty,\theta,\phi): u\in\mathbb{R}, (\theta,\varphi)\in\mathbb{S}^2\}$:
\begin{equation}
r\psi|_{\mathcal{I}^{+}}\sim \frac{1}{u^2},
\label{pullin}
\end{equation} 
as the retarded time $u\rightarrow +\infty$. A rigorous proof of \eqref{price}, \eqref{pullin} was obtained in \cite{paper2} where it was in fact shown that there are  uniform constants $C,C_R$ and initial data norms $E[\psi]$ such that 
\begin{equation}
\left| \psi(t,r,\theta,\varphi)+8I^{(1)}_{0}[\psi]\cdot \frac{1}{t^3}\right|\leq C_R\sqrt{ E[\psi]}\cdot \frac{1}{t^{3+\epsilon}}
\label{a1}
\end{equation} for all $r\leq R$ 
and
\begin{equation}
\left| r\psi|_{\mathcal{I}^{+}}(u,\theta,\varphi)+2I^{(1)}_{0}[\psi]\cdot \frac{1}{u^2}\right|\leq C\sqrt{ E[\psi]}\cdot \frac{1}{u^{2+\epsilon}}
\label{a2}
\end{equation}
along the null infinity $\mathcal{I}^{+}, $
where $\epsilon$ is an appropriately small, positive number. The constant $I^{(1)}_{0}[\psi]$ is explicitly given in terms of the initial data of $\psi$ on a Cauchy hypersurface $\Sigma_0$ by \eqref{npfcompact}.
This work provided a mathematically rigorous derivation of sharp upper and lower pointwise bounds for the evolution of the scalar fields on Schwarzschild backgrounds. In fact, the estimates \eqref{a1} and \eqref{a2} provide the precise leading-order late-time asymptotics for $\psi$ with a coefficient that is explicitly given in terms of initial data. 

For initial data that are not compactly supported, the leading-order late-time asymptotic estimates take the form
\begin{equation}
\left| \psi(t,r,\theta,\varphi)-4I_{0}[\psi]\cdot \frac{1}{t^2}\right|\leq C_R\sqrt{ E[\psi]}\cdot \frac{1}{t^{2+\epsilon}}
\label{a12}
\end{equation} for all $r\leq R$ 
and
\begin{equation}
\left| r\psi|_{\mathcal{I}^{+}}(u,\theta,\varphi)-2I_{0}[\psi]\cdot \frac{1}{u}\right|\leq C\sqrt{ E[\psi]}\cdot \frac{1}{u^{1+\epsilon}}.
\label{a22}
\end{equation}
Here $I_{0}[\psi]$ denotes the \emph{Newman--Penrose constant} for $\psi$, discovered originally in \cite{np2} (see Section \ref{np}).

The derivation of \eqref{a1}--\eqref{a22} uses in particular the improved upper bound decay estimates for the wave equation obtained in \cite{paper1,paper2}.

For previous upper bounds for the wave equation on black hole backgrounds we refer the reader to \cite{lecturesMD, Dafermos2016, MDIR05, part3, dhr-teukolsky-kerr, tataru3, metal, blukerr, moschidis1,  volker1, peter2, semyon1, other1, dssprice, kro} and references therein. Previous lower bounds were rigorously obtained in \cite{luk2015}. We remark that the improved decay estimates for the wave equation obtained in \cite{paper1} also play a role in the recent work of Klainerman--Szeftel on the stability of Schwarzschild backgrounds under axially symmetric polarized perturbations \cite{klainerman17}.

In this paper, we derive the second-order late-time asymptotics for the radiation field of spherically symmetric solutions to the wave equation on Schwarzschild backgrounds (see Section \ref{thegeometricassumptions} for the precise class of admissible spacetimes). These asymptotics, in the case of non-compactly supported initial data, were expected to contain logarithmic corrections to \eqref{a22} in the form\footnote{As an aside, we mention that Christodoulou \cite{nopeeling} obtained logarithmic corrections for the expansion in $r$ of the gravitational field as we approach null infinity (that is in the limit $r\rightarrow \infty$). See also the recent \cite{l16}.} of $\frac{\log u}{u^2}$. Indeed, in \cite{Gomez1994} G\'{o}mez--Winicour--Schmidt studied a spherically symmetric massless scalar field in the region $\left\{r > R\right\}$ of Schwarzschild spacetimes, with non-compactly supported initial data and reflecting boundary conditions imposed on the surface $r = R > 2M$. It was \emph{perturbatively} argued that the following asymptotic behavior for the radiation field $\psi$ should hold along the future null infinity $\mathcal{I}^+$:
\begin{equation}
\left.r\psi\right|_{\mathcal{I^{+}}}(u,\cdot)=2I_0[\psi]\cdot \frac{1}{u}-4MI_{0}[\psi]\cdot \frac{\log u}{u^2}+O(u^{-2}).
\label{gomez}
\end{equation}

Similar logarithmic corrections were derived \emph{numerically} for the asymptotic expansion of massless spherically symmetric scalar fields on extremal Reissner--Nordstr\"{o}m by Lucietti--Murata--Reall--Tanahashi in \cite{hm2012}. Specifically, the following expansion was derived along the future event horizon $\mathcal{H}^+$:
\begin{equation}
\left.\psi\right|_{\mathcal{\mathcal{H}^{+}}}(v,\cdot)=-2H_0[\psi]\cdot \frac{1}{v}+4H_{0}[\psi]\cdot \frac{\log v}{v^2}+O(v^{-2}),
\label{harvey}
\end{equation}
where $H_{0}[\psi]$ denotes the conserved charge on the extremal horizon (see \cite{aretakis1,aretakis2,aretakis3,aretakis4,hj2012,aretakisglue,zimmerman1}).  It is important to note that the result in \cite{hm2012} is consistent with that of \cite{Gomez1994} via the Couch--Torrence conformal symmetry on extremal Reissner--Nordstr\"{o}m (see also \cite{godazgar17}). We further remark that the precise expansion \eqref{harvey} plays a crucial role in the behavior of the scalar fields in the interior of extremal black holes and, in particular, in the regularity properties of the scalar field at the inner horizon (see \cite{gajic,gajic2,harvey2013}).

For the \emph{existence} of a polyhomogeneous asymptotic expansion of radiation fields along future null infinity in a class of asymptotically flat Lorentzian manifolds without trapped null geodesics, we refer to \cite{baskinw, baskinwang, baskin16}.

{In this paper, \textit{we give a rigorous derivation of the second-order late-time asymptotics of the radiation field $r\psi |_{\mathcal{I}^+}$ in the retarded time $u$ for a class of asymptotically flat, stationary and spherically symmetric spacetimes that includes the sub-extremal Reissner--Nordstr\"om and Schwarzschild families}.} We moreover find the precise dependence of the constants appearing in these asymptotics in terms of initial data for $\psi$ on a Cauchy hypersurface $\Sigma_{0}$ which, in the black hole case, crosses the future event horizon to the future of the bifurcation sphere and terminates at future null infinity. We consider two cases: in the first case, we take the initial data to have a finite \emph{but non-zero} Newman--Penrose constant $I_0$ and in the second case, we consider compactly supported \emph{and spherically symmetric} initial data (which implies that $I_0$ vanishes). See Section \ref{sec:TheMainResults}. To our knowledge, there have been no results in the literature, either rigorous, heuristic or numerical, for the precise higher-order late-time asymptotics of the radiation field at null infinity in the context of the wave equation on asymptotically flat black hole backgrounds \textit{with compactly supported initial data}. See, however, \cite{casott1,casott2, casott3} for heuristic results on higher-order logarithmic corrections in the late-time asymptotics in Schwarzschild and Kerr at \emph{finite radius}.

The symmetry assumption on the compactly supported initial data will be removed in an upcoming work, where it will be shown that the late-time behavior for the radiation field of the \textit{non-spherically symmetric projection} $\psi-\frac{1}{4\pi}\int_{\mathbb{S}^2}\psi$ is $O(u^{-3})$ along future null infinity. Hence, the estimate \eqref{eq:2ndasympphiNP0infty} is expected to provide the second-order asymptotics for general solutions without any symmetry assumptions.

Our method expands on physical space techniques introduced in \cite{paper1,paper2}.  In particular, we avoid explicit representations of solutions to the wave equation, the use of Fourier transform or the use of conformal compactifications. The results in particular apply to asymptotically flat Lorentzian manifolds with hyperbolic trapping such as the Schwarzschild backgrounds. See Section \ref{sec:TheMainResults} for the detailed statements of the theorems.

In future works, we will obtain higher-order asymptotics for scalar fields with non-spherically symmetric initial data and we will also address the higher-order asymptotics in extremal black holes. 

\subsection{The main results}
\label{sec:TheMainResults}

We consider 4-dimensional spacetimes $(\mathcal{M},g)$ as defined in Section \ref{thegeometricassumptions}. In particular, the metric $g$ is given by
\[g=-D(r)dudv+r^2(d\theta^2+\sin^2\theta d\varphi^2)\]
 in double null coordinates $(u,v,\theta,\varphi)$. Here $r=r(u,v)$ is the area-radius of the spheres of symmetry.  The metric component $D(r)$ satisfies \eqref{dbehavior}.
We remark that the Schwarzschild and sub-extremal Reissner--Nordstr\"{o}m backgrounds satisfy these assumptions. 

The trace of a  function $f(u,v,\theta,\varphi)$ on null infinity is defined  as follows
\[f|_{\mathcal{I}^+}(u,\theta,\varphi):= \lim_{v\rightarrow \infty}f(u,v,\theta,\varphi).\]

 The stationary Killing field $T$ is equal to 
\[T=\partial_v+\partial_u.\] 
We consider the spacelike-null hypersurfaces $\Sigma_{\tau}$, with $\tau\geq 0$, as defined in Section \ref{thegeometricassumptions} which, in the black hole case, cross the future event horizon and terminate at  future null infinity. 

We further consider regular initial data for the wave equation on the hypersurface $\Sigma_0$ and require that the arising solutions $\psi $ satisfy the assumptions of Section \ref{ivp}. 

The Newman--Penrose constant $I_{0}[\psi]$ is given in Section \ref{np} while  the time-integral $\psi^{(1)}$ along with the time-inverted Newman--Penrose constant $I_{0}^{(1)}[\psi]$ are given in Section \ref{timeinverted}.

Finally, the weighted initial data norms
\[ E[\psi],\ \ E_{T}[\psi],\ \ P[\psi],\ \ P_{T}[\psi] \]
are defined in \ref{energynorms}.

Throughout the paper we will apply ``big O'' notation. Let $\beta \in \R$. Then we denote with $O_k(r^{\beta})$ a $C^k$ function $f : [r_{\rm min}, \infty) \rightarrow \R$ that satisfies the following property: for all $0 \leq  j \leq k$, there exist uniform constants $C_j > 0$, such that $$\left|\frac{d^jf}{dr^j}\right|(r) \leq  C_j r^{-\beta-j}.$$ Furthermore, we denote with $O((v-u-1)^{-\beta})$, $O((u+1)^{-\beta})$ and $O(v^{-\beta})$ spherically symmetric $C^0$ functions $f: \{r\geq R\} \to \R$ that satisfy, respectively, the following properties: there exists a uniform constant $C > 0$, such that \begin{align*}|f(u,v)| \leq &\: C(v-u-1)^{-\beta},\\ |f(u,v)| \leq &\: C(u+1)^{-\beta}, \\ |f(u,v)| \leq &\: Cv^{-\beta}.\end{align*}

The next theorem establishes the second-order asymptotics for the radiation field along null infinity for compactly supported initial data (and more generally for data with vanishing Newman--Penrose constant $I_0[\psi]=0$).
\begin{theorem}\label{thm1}\textbf{(Second-order asymptotics for $r\psi|_{\mathcal{I}^+}$ with $I_{0}[\psi]=0$)} For all spherically symmetric solutions $\psi$ to the wave equation \eqref{we} on the spacetimes $(\mathcal{M},g)$ given in Section \ref{thegeometricassumptions} with compactly supported initial data, there exists a constant $C>0$ that depends only on $D(r)$ and $\Sigma_0$ such that the following asymptotic estimate holds along  the future null infinity $\mathcal{I}^+\cap\{u\geq 0\}$:
\begin{equation}
\label{eq:2ndasympphiNP0infty}
\begin{split}
&\left|r\psi|_{\mathcal{I}^+}(u)+2I_0^{(1)}[\psi]\cdot\frac{1}{(u+1)^{2}}-8MI_0^{(1)}[\psi]\cdot \frac{\log(u+1)}{(u+1)^{3}}\right|\\
& \ \ \ \ \ \ \ \ \  \ \ \ \ \  \  \leq C\left(I_0^{(1)}[\psi]+P_T[\psi^{(1)}]+\sqrt{E_T[\psi^{(1)}]}\right)\cdot \frac{1}{(u+1)^{3}}.
\end{split}
\end{equation}
The constant $I_0^{(1)}[\psi]$ is explicitly given in terms of the initial data of $\psi$ on $\Sigma_{0}$ by \eqref{npfcompact}. 
\end{theorem}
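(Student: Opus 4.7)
The approach is to use the \emph{time-inversion trick}: since $I_0[\psi]=0$, the time-integral $\psi^{(1)}$ introduced in Section \ref{timeinverted}, characterized by $T\psi^{(1)} = \psi$, is itself a spherically symmetric solution of \eqref{we} whose Newman--Penrose constant equals $I_0^{(1)}[\psi]$, generically non-zero. The problem therefore reduces to deriving sharp second-order asymptotics for $r\psi^{(1)}|_{\mathcal{I}^+}$ and then differentiating in $u$.

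First, I would invoke (or derive in parallel) the $I_0\neq 0$ analogue of \eqref{a22}, sharpened to second order:
\begin{equation*}
r\psi^{(1)}|_{\mathcal{I}^+}(u) = \frac{2I_0^{(1)}[\psi]}{u+1} - 4M I_0^{(1)}[\psi]\cdot \frac{\log(u+1)}{(u+1)^{2}} + O\big((u+1)^{-2}\big),
\end{equation*}
with the remainder controlled by $P_T[\psi^{(1)}]$ and $\sqrt{E_T[\psi^{(1)}]}$. Morally, such an expansion is obtained by an $r^p$-hierarchy argument together with the transport equation for $\partial_v(r\psi^{(1)})$: writing $D(r) = 1 - 2M/r + O(r^{-2})$ produces a source of the form $(2M/r^2)\cdot r\psi^{(1)}$, and integrating it along outgoing null rays against the leading $1/u$ behaviour of $r\psi^{(1)}$ precisely generates the coefficient $-4M$ in front of $\log(u+1)/(u+1)^2$.

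Second, I would pass from $\psi^{(1)}$ back to $\psi$. Since $Tr = 0$ and $T = \partial_u + \partial_v$, we have $r\psi = T(r\psi^{(1)})$; taking $v\to\infty$ at fixed $u$ and using a priori decay of $\partial_v(r\psi^{(1)})$, the $\partial_v$ part drops out and
\begin{equation*}
r\psi|_{\mathcal{I}^+}(u) = \frac{d}{du}\bigl(r\psi^{(1)}|_{\mathcal{I}^+}\bigr)(u).
\end{equation*}
Differentiating the previous expansion term-by-term produces
$-2I_0^{(1)}[\psi]/(u+1)^{2} + 8M I_0^{(1)}[\psi]\log(u+1)/(u+1)^{3}$ together with a subleading $-4M I_0^{(1)}[\psi]/(u+1)^{3}$ that is absorbable into the $O((u+1)^{-3})$ remainder. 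This yields exactly \eqref{eq:2ndasympphiNP0infty}.

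The principal obstacle is isolating the $\log(u+1)/(u+1)^{2}$ correction in the expansion of $r\psi^{(1)}|_{\mathcal{I}^+}$ with the precise coefficient $-4M$, rather than burying it inside an $O(u^{-1-\epsilon})$ error as in \eqref{a22}. This requires pushing the $r^p$-weighted estimates past the critical exponent $p=3$ and a careful limiting argument at $\mathcal{I}^+$ that distinguishes the mass contribution in $D(r)$ from purely $O(r^{-2})$ metric terms, which produce only an $O((u+1)^{-2})$ contribution without logarithm. A secondary technical point is justifying the interchange of $d/du$ with $\lim_{v\to\infty}$ in the reduction; this needs uniform-in-$u$ control of $T\partial_v(r\psi^{(1)})$ along outgoing null rays approaching null infinity, obtained by commuting the estimates for $\psi^{(1)}$ with $T$.
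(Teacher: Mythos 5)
Your overall strategy --- pass to the time-integral $\psi^{(1)}$ with $T\psi^{(1)}=\psi$ and nonzero Newman--Penrose constant $I_0^{(1)}[\psi]$, and recover $r\psi|_{\mathcal{I}^+}$ from the asymptotics of $\psi^{(1)}$ --- is exactly the paper's route. However, there is a genuine gap in the step where you "differentiate the previous expansion term-by-term." The expansion you invoke for $r\psi^{(1)}|_{\mathcal{I}^+}$ is the analogue of \eqref{eq:asymphiinfty}, whose remainder is controlled only \emph{pointwise} by $C(u+1)^{-2}$. Knowing $f(u)=O((u+1)^{-2})$ gives no bound whatsoever on $f'(u)$; the formal term-by-term derivative happens to produce the correct coefficients $-2I_0^{(1)}(u+1)^{-2}+8MI_0^{(1)}\log(u+1)(u+1)^{-3}$, but nothing in your argument shows that the derivative of the error is $O((u+1)^{-3})$. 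The issue you do flag --- interchanging $d/du$ with $\lim_{v\to\infty}$ --- is secondary; the essential missing ingredient is an asymptotic expansion for the \emph{derivative} itself with a quantitatively controlled remainder.

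The paper closes this gap by proving the expansion for $T(r\psi)$ directly rather than by differentiating the expansion for $r\psi$: this is Proposition \ref{propot} (leading to Theorem \ref{thm3}), which is then applied to $\psi^{(1)}$ in Proposition \ref{propovic} using $T(r\psi^{(1)})=r\psi$. That proposition is substantially more delicate than the $r\psi$ case: it uses the sharper leading-order asymptotics of $T\psi$ from \eqref{eq:loasympTpsi} (decaying like $(u+1)^{-2}v^{-1}$, which is why the norms $E_T,P_T$ enter), writes $T\phi$ via the fundamental theorem of calculus in both $u$ and $v$, and converts the resulting difference of a $v$-integral and a $u$-integral into a spacetime integral of $T$ applied to the source $-\tfrac14 DD'\psi$ (identity \eqref{eq:spacetimeint}), so that the improved $T$-decay can be exploited on the error terms. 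If you want to repair your argument, you should replace the term-by-term differentiation by a direct derivation of the second-order asymptotics of $T(r\psi^{(1)})$ along these lines; the commutation with $T$ that you mention in passing is in fact the main content of the proof, not a secondary technical point.
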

\begin{remark}For general solutions $\psi$ to the wave equation we expect that the leading-order late-time asymptotics for the radiation field of the non-spherically symmetric projection $\psi-\frac{1}{4\pi}\int_{\mathbb{S}^2}\psi$ is $O(u^{-3})$ along $\mathcal{I}^+$. Hence, the estimate \eqref{eq:2ndasympphiNP0infty} is expected to provide the second-order asymptotics for general solutions without any symmetry assumptions. This is addressed in an upcoming work. 
\end{remark}
Theorem \ref{thm1} is proved in Section \ref{pthm1}. It is important to remark that the norms $P_T[\psi^{(1)}], {E_T[\psi^{(1)}]}$ are finite and can in fact be bounded by appropriate norms of the initial data of $\psi$ (see \cite{paper2}). 

The next theorem derives the second-order asymptotics for the radiation field along null infinity in the case when the Newman--Penrose constant $I_0[\psi]$ is non-vanishing. In this case, we drop the symmetry assumption and we can in fact consider general solutions $\psi$ to the wave equation. 

\begin{theorem}\label{thm2} \textbf{(Second-order asymptotics for $r\psi|_{\mathcal{I}^+}$ with $I_{0}[\psi]\neq 0$)} For all solutions $\psi$ to the wave equation \eqref{we} on the spacetimes $(\mathcal{M},g)$ given in Section \ref{thegeometricassumptions} with non-vanishing Newman--Penrose constant $I_0[\psi]$, there exists a constant $C>0$ that depends only on $D(r)$ and $\Sigma_0$ such that the following asymptotic estimate holds along the future null infinity $\mathcal{I}^+\cap\{u\geq 0\}$:
\begin{equation}
\label{eq:asymphiinfty}
\begin{split}
&\Bigg|r\psi|_{\mathcal{I}^+}(u)-2I_0[\psi]\cdot \frac{1}{u+1}+4MI_0[\psi]\cdot\frac{\log (u+1)}{(u+1)^{2}}\Bigg|\\&\ \ \ \ \ \ \ \ \ \ \ \ \ \ \ \ \ \ \ \ \ \ \ \leq  C\left(I_0[\psi]+\sqrt{E[\psi]}+P[\psi]\right)\cdot \frac{1}{(u+1)^{2}}.
\end{split}
\end{equation}
\end{theorem}

Finally, the third theorem derives the second-order asymptotics for the radiation field of the $T$-derivative $\psi$ along null infinity in the case when the Newman--Penrose constant $I_0[\psi]$ is non-vanishing.  
\begin{theorem}\label{thm3}  \textbf{(Second-order asymptotics for $T(r\psi)|_{\mathcal{I}^+}$ with $I_{0}[\psi]\neq 0$)} For all solutions $\psi$ to the wave equation \eqref{we} on the spacetimes $(\mathcal{M},g)$ given in Section \ref{thegeometricassumptions} with non-vanishing Newman--Penrose constant $I_0[\psi]$, there exists a constant $C>0$ that depends only on $D(r)$ and $\Sigma_0$ such that the following asymptotic estimate holds along the future null infinity $\mathcal{I}^+\cap\{u\geq 0\}$:
\begin{equation}
\label{eq:2ndasympTphiinfty}
\begin{split}
&\left|T(r\psi)|_{\mathcal{I}^+}(u)+2I_0[\psi]\cdot \frac{1}{(u+1)^{2}}-8MI_0[\psi]\cdot \frac{\log(u+1)}{(u+1)^{3}}\right|\\
&\ \ \ \ \ \ \ \ \ \ \ \ \ \ \ \ \ \ \ \ \ \ \ \ \ \ \ \ \ \ \ \leq  C\left(I_0[\psi]+P_T[\psi]+\sqrt{E_T[\psi]}\right)\cdot\frac{1}{(u+1)^{3}}.
\end{split}
\end{equation}
\end{theorem}
The Theorems \ref{thm2}, \ref{thm3} are proved in Sections \ref{pthm2}, \ref{pthm3}, respectively. We note that Theorem \ref{thm1} is proved using Theorem \ref{thm3} and the time-inversion construction introduced in \cite{paper2}.

\subsection{Discussion on the full asymptotic expansion}
\label{sec:Discussion}

In this section we provide a brief discussion on the coefficients in the \emph{full} asymptotic expansion of  scalar fields $\psi$ and their radiation fields $r\psi$. We will not produce here the detailed full asymptotic expansion, but rather we will provide implicit expressions for all quantities of the initial data of $\psi$ which determine its late-time asymptotic expansion to all orders. 

For the purposes of this discussion, we consider spherically symmetric\footnote{Non-spherically symmetric solutions will be addressed in an upcoming paper. } solutions $\psi$ to the wave equation with smooth, compactly supported initial data on a Cauchy hypersurface $\Sigma_0$ that crosses the event horizon and terminates at null infinity.

We first consider a \textbf{fixed} smooth spherically symmetric solution $\boldsymbol{\psi}_{\text{seed}}$ to the wave equation with Newman--Penrose constant equal to: 
\[ I_{0}^{(0)}[\boldsymbol{\psi}_{\text{seed}}]=1.\]

We moreover assume that with respect to the coordinate $\rho=\frac{1}{r}$ along $\Sigma_0$, $\boldsymbol{\psi}_{\text{seed}}$ is smooth at $\rho=0$ (which corresponds to $r\to \infty$).
Clearly, for this solution we also have
\begin{equation}
I_{0}^{(0)}[\boldsymbol{\psi}_\text{seed}]=I_{0}^{(1)}[T\boldsymbol{\psi}_\text{seed}]=I_{0}^{(2)}[T^2\boldsymbol{\psi}_\text{seed}]=\cdots =I_{0}^{(n)}[T^n \boldsymbol{\psi}_\text{seed}]=1
\label{tkpsiseed}
\end{equation}
for all $n\geq 1$. 

Let's fix $k\in\mathbb{N}$. We consider an arbitrary smooth, spherically symmetric solution $\psi$ to the wave equation with compactly supported initial data. From the results of \cite{paper2}, the leading-order term in the asymptotic expansion of 
 $\psi|_{r=R}$ is $-8I_{0}^{(1)}[\psi] \cdot \frac{1}{\tau^{3}}$ and the leading-order term in the asymptotic expansion of 
 $r\psi|_{\mathcal{I}^{+}}$ is $-2I_{0}^{(1)}[\psi] \cdot \frac{1}{\tau^{2}}$. Hence, the coefficient of the leading order terms is proportional only to $I_{0}^{(1)}[\psi]$. We will next find all quantities of the initial data of $\psi$ which produce terms in the asymptotic expansions of $\psi|_{r=R}$ and $r\psi|_{\mathcal{I}^{+}}$ that decay slower than $\frac{1}{\tau^{k+3+\epsilon}}$ for any $\epsilon>0$ (and any $k\geq 1$). 

We define the following auxiliary solutions to the wave equation:
\begin{equation*}
\begin{split}
\psi_0:=& \psi,\\
\psi_1 :=& \psi_0-I_{0}^{(1)}[\psi_0]\cdot T \boldsymbol{\psi}_{\text{seed}}, \\
\psi_2 :=& \psi_1-I_{0}^{(2)}[\psi_1]\cdot T^2 \boldsymbol{\psi}_{\text{seed}}, \\
\cdots \ & \\
\psi_k :=& \psi_{k-1}-I_{0}^{(k)}[\psi_{k-1}]\cdot T^k \boldsymbol{\psi}_{\text{seed}}. \\
\end{split}
\end{equation*}
Note that the above recursive construction of the auxiliary solutions $\psi_i$ is well-defined since for every $i\geq 1$ we have 
\[I_{0}^{(0)}[\psi_i]=I_{0}^{(1)}[\psi_i]=I_{0}^{(2)}[\psi_i]=\cdots =I_{0}^{(i)}[\psi_i]=0 \]
and hence $I_{0}^{(i+1)}[\psi_i]$ is well-defined. Since
\[I_{0}^{(0)}[{\psi_k}]=I_{0}^{(1)}[{\psi_k}]=I_{0}^{(2)}[{\psi_k}]=\cdots = I_{0}^{(k)}[{\psi_k}]=0\] 
we obtain by the Theorem 1.5 of \cite{paper2}:
\begin{equation}
\psi_k|_{\{r=R\}}=C_k\cdot  I_{0}^{(k+1)}[\psi_k]\cdot \frac{1}{\tau^{3+k}}+O\left(\frac{1}{\tau^{3+k+\epsilon}}\right),
\label{psik}
\end{equation}
along $\{ r=R\}$ hypersurfaces  and 
\begin{equation}
r{\psi_k}|_{\mathcal{I}^{+}}= c_k\cdot  I_{0}^{(k+1)}[{\psi_k}]\cdot \frac{1}{\tau^{2+k}}+O\left(\frac{1}{\tau^{2+k+\epsilon}}\right) 
\label{rpsik}
\end{equation}
 along the future null infinity $\mathcal{I}^{+}$, 
for some numerical constants $C_k,c_k$ that were explicitly computed in \cite{paper2}. 
 Note that we have  
\begin{equation}
\psi_k=\psi-I_{0}^{(1)}[\psi_0] \cdot T\boldsymbol{\psi}_{\text{seed}}-I_{0}^{(2)}[\psi_1] \cdot T^2\boldsymbol{\psi}_{\text{seed}}-\cdots-I_{0}^{(k)}[\psi_{k-1}] \cdot T^k\boldsymbol{\psi}_{\text{seed}}.
\label{ak}
\end{equation}
In view of \eqref{psik}, \eqref{rpsik} and \eqref{ak} we obtain:
\begin{equation}
\begin{split}
\psi=& I_{0}^{(1)}[\psi_0] \cdot T\boldsymbol{\psi}_{\text{seed}}+I_{0}^{(2)}[\psi_1] \cdot T^2\boldsymbol{\psi}_{\text{seed}}+\cdots+I_{0}^{(k)}[\psi_{k-1}] \cdot T^k\boldsymbol{\psi}_{\text{seed}}\\ &
+ C_k\cdot  I_{0}^{(k+1)}[\psi_k]\cdot \frac{1}{\tau^{3+k}}+O\left(\frac{1}{\tau^{3+k+\epsilon}}\right)
\end{split}
\label{psifinalk}
\end{equation}
asymptotically in time along $\{r=R\}$ hypersurfaces and 
\begin{equation}
\begin{split}
r\psi=& I_{0}^{(1)}[\psi_0] \cdot T\boldsymbol{r\psi}_{\text{seed}}+I_{0}^{(2)}[\psi_1] \cdot T^2\boldsymbol{r\psi}_{\text{seed}}+\cdots+I_{0}^{(k)}[\psi_{k-1}] \cdot T^k\boldsymbol{r\psi}_{\text{seed}}\\&
+ c_k\cdot  I_{0}^{(k+1)}[\psi_k]\cdot \frac{1}{\tau^{2+k}}+O\left(\frac{1}{\tau^{2+k+\epsilon}}\right)
\end{split}
\label{rpsifinalk}
\end{equation}
asymptotically in time along null infinity.

Clearly, the constants $I_{0}^{(i)}[\psi_{i-1}], i=1,2,\cdots, k$ depend on $\psi$. On the other hand, the function $\boldsymbol{\psi}_{\text{seed}}$, its radiation field and their time derivatives are completely independent of $\psi$. Hence, by virtue of \eqref{psifinalk} and \eqref{rpsifinalk}, \textit{the asymptotic expansions of the scalar field $\psi$ and its radiation field $r\psi$ involve only the constants $I_{0}^{(i)}[\psi_{i-1}], i=1,2,\cdots, k$, which can be explicitly determined by the initial data of $\psi$ (and the choice of $\boldsymbol{\psi}_{\text{seed}}$). The exact asymptotic expansions of $\psi$ and its radiation field are obtained from the asymptotic expansions of the \textbf{fixed} function $\boldsymbol{\psi}_{\text{seed}}$, its radiation field and their time derivatives. } We remark that in view of Theorem \ref{thm1}, the first two terms in the late-time asymptotic expansion of $\boldsymbol{\psi}_{\text{seed}}$ along $\mathcal{I}^+$ can be determined explicitly.

\section{Preliminaries}
\label{preliminaries}

\subsection{The geometric assumptions}
\label{thegeometricassumptions}

In this paper we consider the spacetimes $(\mathcal{M},g)$ of Section 2.1 of \cite{paper2}
. These spacetimes are stationary, spherically symmetric and asymptotically flat and, in particular, include the Schwarzschild family and the larger sub-extremal Reissner--Nordstr\"{o}m family of black holes as special cases. In this section we briefly recall the geometric assumptions on the spacetime metrics and introduce the notation that we use in this paper. 

\subsubsection{The Lorentzian manifolds $(\mathcal{M},g)$}
\label{thegeometricassumptions1}

The manifold $\mathcal{M}$ is covered by appropriate double null coordinates $(u,v,\theta,\varphi)$ with respect to which the metric takes the form 
\begin{equation*}
g=-D(r)dudv+r^2(d\theta^2+\sin^2 \theta d\varphi^2),
\end{equation*}
with $D(r)$ a smooth function such that
\begin{equation}
D(r)=1-\frac{2M}{r}+\frac{d_1}{r^2}+O_3(r^{-2-\beta})
\label{dbehavior}
\end{equation}
where $d_1\in\mathbb{R}$. Here $r=r(u,v)$ denotes the area-radius of the spheres $S_{u,v}$ of symmetry. 
The null hypersurfaces $C_{u_{0}}=\{u=u_0\}$ terminate in the future (as $r,v\rightarrow +\infty$) at future null infinity  $\mathcal{I}^{+}$. Note also that $u$ is a ``time'' parameter along the future null infinity $\mathcal{I}^{+}$ such that $u$ increases towards the future. 

Furthermore, by an appropriate normalization (see, for instance, \cite{paper2}) we can assume that 
\begin{equation}
v-u=2r^*
\label{uvr}
\end{equation} 
where the function $r^*=r^{*}(r)$ is given by
\[ r^*=R+\int_{R}^rD^{-1}(r')dr'.\]
Here $R>0$ is a sufficiently large but fixed constant. Since the bulk of the analysis of the present paper occurs in the region $\{r\geq R\}$ we start by proving the following lemma for this region:
\begin{lemma}
\label{lm:estr}
There exists a constant $C_R>0$ such that we can estimate in $\{r\geq R\}\cap\{u\geq 0\}$:
\begin{align*}
\left|r-\frac{v-u}{2}+2M\log (v-u)\right|\leq&\: C_R,\\
\left|r^{-1}-\frac{2}{v-u}-8M(v-u)^{-2}\log (v-u)\right|\leq&\: C_R(v-u)^{-2}\\
\left|r^{-2}-\frac{4}{(v-u)^2}-32M(v-u)^{-3}\log (v-u)\right|\leq&\: C_R (v-u)^{-3},\\
r^{-3}(u,v)\leq &\:C_R (v-u)^{-3}.
\end{align*}
Let $\gamma=\{u=\frac{v}{2}+R_*\}$, where $R_*=r_*(R)$. There exists a constant $c_R>0$ such that
\begin{equation*}
\begin{array}{ll}
c_R \cdot v \leq v-u-1 \leq v\quad &\textnormal{if}\quad  0\leq u\leq u_{\gamma}(v):=\frac{v}{2}+R_*,\\
c_R\cdot v\leq u\leq v \quad  &\textnormal{if}\quad  u_{\gamma}(v)\leq u\leq v-2R_*.
\end{array}
\end{equation*}
\end{lemma}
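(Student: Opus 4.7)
The plan is to obtain the three expansions of $r$, $r^{-1}$, $r^{-2}$ from the asymptotic expansion of the tortoise coordinate $r^*$, and then handle the region estimates by elementary algebra.

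First I would expand $D^{-1}(r)$ for large $r$: from \eqref{dbehaviour}, $D(r)=1-\tfrac{2M}{r}+O(r^{-2})$, so
\[
D^{-1}(r)=1+\tfrac{2M}{r}+O(r^{-2}).
\]
Integrating from $R$ to $r$ and using the definition $r^*=R+\int_R^r D^{-1}(r')\,dr'$, one obtains
\[
r^*(r)=r+2M\log r+C_0+O(r^{-1})
\]
for some constant $C_0=C_0(R,M,d_1,\ldots)$. Combined with $v-u=2r^*$, this yields
\[
\tfrac{v-u}{2}=r+2M\log r+O(1).
\]
Since $r\to\infty$ as $v-u\to\infty$ uniformly in the region $\{r\geq R\}$, this relation can be inverted iteratively: the leading order gives $r=\tfrac{v-u}{2}+O(\log(v-u))$, so $\log r=\log(v-u)-\log 2+O(\tfrac{\log(v-u)}{v-u})$, and substituting back gives
\[
r=\tfrac{v-u}{2}-2M\log(v-u)+O(1),
\]
which is the first claimed estimate. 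The main subtlety here is the replacement of $\log r$ by $\log(v-u)$, which is justified by the above bootstrap and introduces only an $O(1)$ error.

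Next I would obtain the expansions of $r^{-1}$ and $r^{-2}$ by Taylor-expanding. Writing
\[
r=\tfrac{v-u}{2}\left(1+\tfrac{-4M\log(v-u)+O(1)}{v-u}\right),
\]
the geometric series expansion of $(1+x)^{-1}$ and $(1+x)^{-2}$, valid once $v-u$ is large enough (i.e.\ $R$ is chosen sufficiently large), yields
\[
r^{-1}=\tfrac{2}{v-u}+\tfrac{8M\log(v-u)}{(v-u)^2}+O\bigl((v-u)^{-2}\bigr),
\qquad
r^{-2}=\tfrac{4}{(v-u)^2}+\tfrac{32M\log(v-u)}{(v-u)^3}+O\bigl((v-u)^{-3}\bigr),
\]
which are the second and third estimates. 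The bound on $r^{-3}$ then follows trivially from $r^{-1}\leq C(v-u)^{-1}$.

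Finally, for the two-sided bounds across the curve $\gamma$, observe that $r^*(R)=R$, so $R_*=R$ and $\gamma=\{r=R\}$. In the region $\{r\geq R\}\cap\{u\geq 0\}$ one has $v-u\geq 2R_*$, hence $v$ is bounded below by $2R_*$. If $0\leq u\leq u_\gamma(v)=\tfrac{v}{2}+R_*$, then $v-u-1\geq \tfrac{v}{2}-R_*-1$ and $v-u-1\leq v$ (using $u\geq 0$); splitting into $v\geq 4R_*$ (where $\tfrac{v}{2}-R_*-1\geq \tfrac{v}{8}$) and $2R_*\leq v\leq 4R_*$ (where $v-u-1\geq 2R_*-1$ and $v$ is bounded above) gives $v-u-1\geq c_R v$ for a suitable $c_R>0$. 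In the complementary range $u_\gamma(v)\leq u\leq v-2R_*$, one trivially has $u\leq v$ and $u\geq \tfrac{v}{2}+R_*\geq \tfrac{v}{2}$, which proves the second pair of bounds. The only ``obstacle'' in the whole argument is the inversion step for $r$ in terms of $v-u$; everything else is straightforward asymptotic expansion or elementary inequalities.
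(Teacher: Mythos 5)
Your argument follows essentially the same route as the paper's: expand $D^{-1}(r)=1+2M/r+O(r^{-1-\beta})$, integrate to get $r_*(r)=r+2M\log r+O(1)$, invert via $v-u=2r_*$ (with the bootstrap $\log r=\log(v-u)+O(1)$), and then obtain the $r^{-1}$ and $r^{-2}$ expansions by geometric series; the paper in fact omits the proof of the final two-sided region bounds, which you supply correctly by elementary inequalities. Two small remarks. First, the sign you derive, $r=\tfrac{v-u}{2}-2M\log(v-u)+O(1)$, is the mathematically correct one and is the one consistent with the lemma's stated expansions of $r^{-1}$ and $r^{-2}$; the first display of the lemma as printed (and part of the paper's own proof) carries a sign typo, so your assertion that this ``is the first claimed estimate'' should be read modulo that typo rather than taken literally. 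Second, your aside that $\gamma=\{r=R\}$ is false: on $\gamma=\{u=\tfrac{v}{2}+R_*\}$ one has $r_*=\tfrac{v}{4}-\tfrac{R_*}{2}$, so $\gamma$ is a curve along which $r$ grows linearly in $v$, whereas $\{r=R\}$ is $\{u=v-2R_*\}$; this is harmless, though, since your subsequent inequalities use only the formula $u_\gamma(v)=\tfrac{v}{2}+R_*$ and the bound $u\leq v-2R_*$ coming from $r\geq R$.
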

\begin{proof}
Let $r_0\in (r_{\min},\infty)$ be arbitrary. We have that
\begin{equation*}
\frac{v-u}{2}=r_*(r)=\int_{r_0}^r D^{-1}(r')\,dr'=\int_{r_0}^r 1+2M/r'+O(r'^{-1-\beta})\,dr'=r+2M\log r+O(1).
\end{equation*}
Hence, in $\{r\geq R\}$, we can conclude that there exist constants $c_R,C_R>0$ such that
\begin{equation*}
c_R\cdot (v-u)\leq r\leq C_R (v-u).
\end{equation*}
Moreover, 
\begin{equation*}
\left|r-\frac{v-u}{2}+2M\log (v-u)\right|\leq C_R.
\end{equation*}
Hence
\begin{equation*}
\begin{split}
\frac{1}{r}=&\:\frac{2}{v-u}\left(\frac{1}{1+4M(v-u)^{-1}\log(v-u)+O((v-u)^{-1})}\right)\\
=&\:\frac{2}{v-u}\left(1-\frac{4M(v-u)^{-1}\log(v-u)+O((v-u)^{-1})}{1+4M(v-u)^{-1}\log(v-u)+O((v-u)^{-1})}\right).
\end{split}
\end{equation*}
This implies that there exists a $C>0$ such that
\begin{equation*}
\left|r^{-1}-\frac{2}{v-u}-8M(v-u)^{-2}\log (v-u)\right|\leq C(v-u)^{-2}.
\end{equation*}
and consequently,
\begin{equation*}
\left|r^{-2}-\frac{4}{(v-u)^2}-32M(v-u)^{-3}\log (v-u)\right|\leq C(v-u)^{-3}.
\end{equation*}
\end{proof}

For the region $\{r\leq R\}$ we consider two cases depending on the value of\\ $r_{\text{min}}=\inf_{\mathcal{M}}r<R$. 

Case I: $r_{\text{min}}>0$ (see Figure \ref{fig:fullfoliations22} below). In this case, we assume that $D(r_{\text{min}})=0$ and that $\left.\frac{dD(r)}{dr}\right|_{r=r_{\text{min}}}\neq 0$ and $D(r)>0$ for $r>r_{\text{min}}$. The boundary hypersurface $\{r=r_{\text{min}}\}\cap\{v<\infty \}$ is known as the future event horizon and is denoted by $\mathcal{H}^{+}$.  The sub-extremal Reissner-Nordstr\"{o}m is a special case with $D(r)=1-\frac{2M}{r}+\frac{e^2}{r^2}$ with $|e|<M$.

Case II: $r_{\text{min}}=0$ (see Figure \ref{fig:fullfoliations12} below). In this case, we assume that $\inf_{\mathcal{M}}D(r)>0$. The Minkowski spacetime is a special case with $D(r)=1$. 

\subsubsection{The spacelike-null foliation $\Sigma_{\tau}$}
\label{sec:ASpacelikeNullFoliation}

 The stationary Killing field $T$ is equal to 
\[T=\partial_v+\partial_u.\] 

We define the outgoing null hypersurfaces $$\mathcal{N}_{u'}=\{(u,r,\theta,\varphi)\,:\, u=u',\: r\geq R\}$$ and the regions $$\mathcal{A}=\bigcup_{u\in[0,\infty)}\mathcal{N}_{u}$$ and $$\mathcal{A}^{u_2}_{u_1}=\bigcup_{u\in [u_1,u_2]}\mathcal{N}_{u}.$$ 

 We  define the spacelike-null hypersurface $\Sigma_{0}$ as follows:
\begin{equation*}
\Sigma_0=\{(v,r,\theta,\varphi)\,;\, v=v_{\Sigma_{0}}(r), r\leq R\}\cup \mathcal{N}_0,
\end{equation*}
where $v_{\Sigma_{0}}(r)$ is the smooth function satisfying:
\begin{align*}
\frac{dv_{\Sigma_0}}{dr}=&\:h_{\Sigma_0},\\
v_{\Sigma_0}(R)=&\:2r_*(R),
\end{align*}
where  $h_{\Sigma_{0}}: [r_{\text{min}},R] \rightarrow \mathbb{R}$ be a smooth, positive function.
Note that $(v,v_{\Sigma_0}(R),\theta,\varphi)\in \mathcal{N}_0$ and hence $\Sigma_{0}$ is spacelike for $r\leq R$ and null for $r\geq R$.

Consider the future $J^+(\Sigma_0)$ of $\Sigma_{0}$. We assume that $\inf_{J^+(\Sigma_0)}v=v_0>0$. Hence,\textit{ in the sub-extremal Reissner--Nordstr\"{o}m  black hole case, the hypersurface $\Sigma_0$ intersects the future event horizon to the future of the bifurcation sphere}. Moreover, we define the smooth function $\tau $ on $J^+(\Sigma_0)$, such that $\tau|_{{\Sigma_0}}=0$, and $T(\tau)=1$. In $\mathcal{A}$ we clearly have $\tau=u$.

We finally define $\Sigma_{\tau}$ to be the level sets of the function $\tau$ and denote
\begin{equation*}
\mathcal{R}=J^+(\Sigma_0)=\bigcup_{\tau\in [0,\infty)}\Sigma_{\tau}.
\end{equation*}
By stationarity, we have that $\Sigma_{\tau}$ is isometric to $\Sigma_0$ for all $\tau\geq 0$. 
\begin{figure}[H]
\centering
\includegraphics[width=3in]{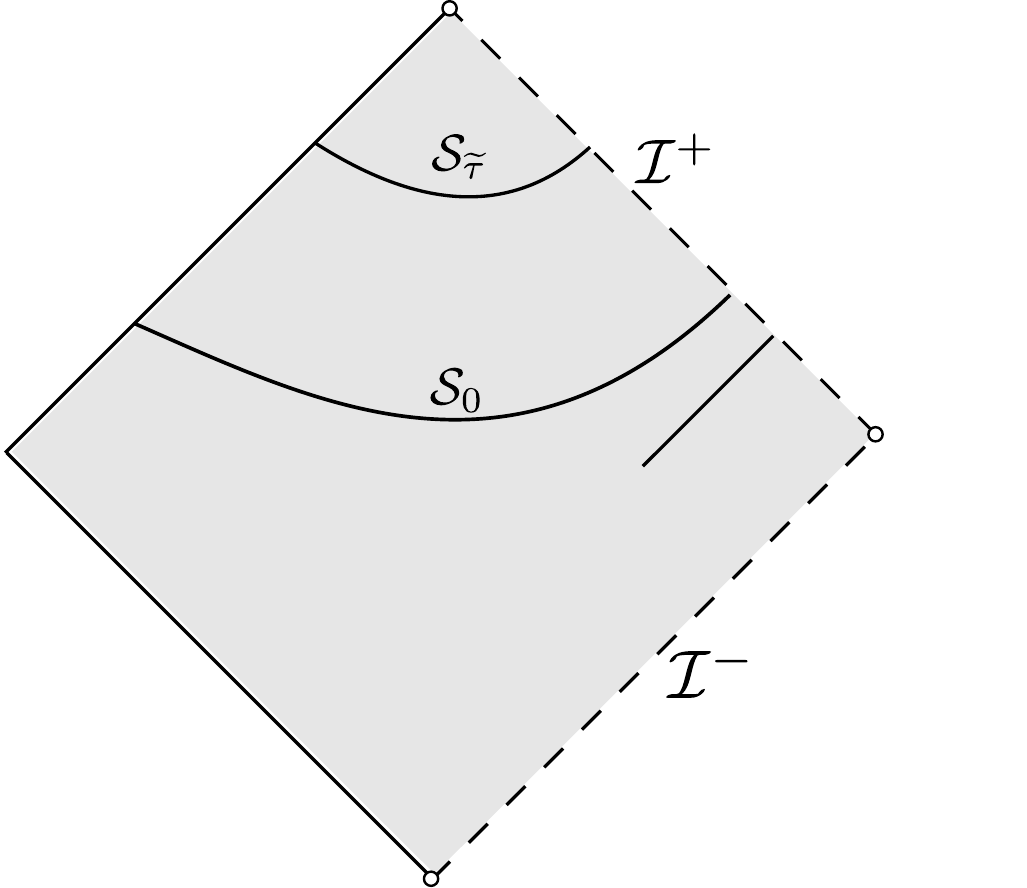}
\caption{\label{fig:fullfoliations22}Penrose diagram of $\mathcal{M}$ in the case $r_{\text{min}}>0$. The hypersurfaces $\Sigma_{\tau}$ intersect the future event horizon $\mathcal{H}^{+}$.}
\end{figure}
\begin{figure}[H]
\centering
\includegraphics[width=3in]{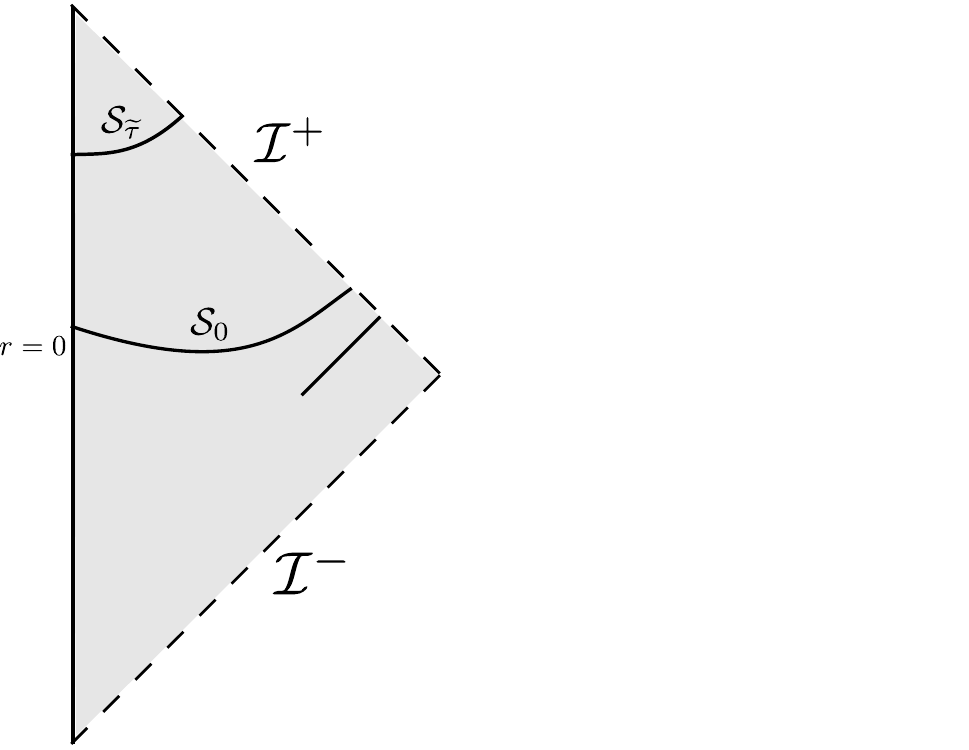}
\caption{\label{fig:fullfoliations12} The Penrose diagram of $\mathcal{M}$ in the case $r_{\text{min}}=0$.}
\end{figure}
We can also consider the coordinate chart $(\tau,\rho,\theta,\varphi)$ in $\mathcal{R}\cap\{r\leq R\}$, where $\rho=r|_{\Sigma_0}$. Then we can express:

\begin{align*}
\partial_{\tau}=&\:T,\\
\partial_{\rho}=&-2D^{-1}\partial_{u}+h_{\Sigma_0}T.
\end{align*}

\subsection{Assumptions for the wave equation}
\label{ivp}
We consider smooth initial data on $\Sigma_{0}$ for the wave equation. We consider that the background geometry is such that the following estimates hold for solutions to the wave equation:

\begin{enumerate}
	\item Uniform boundedness of the energy fluxes
	\item Local integrated energy decay in the form of local and global Morawetz estimates. 
\end{enumerate}
The precise estimates that we need to assume are contained in Section 2.4 in \cite{paper2}.

\subsection{The Newman--Penrose constant $I_{0}[\psi]$ at null infinity}
\label{np}

Let $\psi$ be a smooth spherically symmetric solution to the wave equation \eqref{we}
such that there is a finite constant $I_0$ such that
\begin{equation}
\label{eq:rasyminitdata}
\partial_r(r\psi)|_{u=0}(u=0,r)=I_0\cdot r^{-2}+O(r^{-3}).
\end{equation}
Note that \[\partial_r=\frac{2}{D}\partial_v.\]
The constant $I_0=I_0[\psi]$ is known as the Newman--Penrose constant and has the following conservation property:
\[\lim_{r\rightarrow \infty}r^2\partial_r(r\psi)|_{u=u_0}(r)=I_0[\psi]\]
for all $u_0$.  Furthermore, it always holds that \[I_0[T\psi]=0.\]

It will be convenient to relate the $r$-asymptotics of the initial data in \eqref{eq:rasyminitdata} to asymptotics in $v$.
\begin{lemma}
Let $\psi$ satisfy \eqref{eq:rasyminitdata} at $u=0$. Then there exists a constant $C_R>0$, such that we can estimate along $\mathcal{N}_0$:
\begin{equation}
\label{eq:vasyminitdata}
\left|\partial_v(r\psi)(0,v)-2I_0v^{-2}-16MI_0v^{-3}\log v\right|\leq C_R\cdot (P[\psi]+I_0[\psi]),
\end{equation}
where the constant $P[\psi]>0$ is defined in \eqref{np1-der1}.
\end{lemma}
\begin{proof}
By the definition of $P[\psi]$ in  \eqref{np1-der1} and by \eqref{dbehavior}, we can estimate along $\mathcal{N}_0$
\begin{equation*}
\left|\partial_r(r\psi)(0,v)-I_0r^{-2}\right|\leq C_R\cdot P[\psi]r^{-3},
\end{equation*}
so, using Lemma \ref{lm:estr} to estimate $r^{-3}\leq C v^{-3}$ along $\{u=0\}$, together with the fact that $\partial_r=2D^{-1}\partial_v=(2+O(r^{-1})\partial_v$ in $(u,r)$ coordinates, we obtain
\begin{equation*}
\left|\partial_v(r\psi)(0,v)-\frac{1}{2}I_0r^{-2}\right|\leq C_R(I_0+P[\psi])v^{-3}.
\end{equation*}
We use Lemma \ref{lm:estr} once more to estimate
\begin{equation*}
\left|\frac{1}{2}I_0r^{-2}-2I_0v^{-2}-16Mv^{-3}\log v\right|\leq C_RI_0v^{-3}.
\end{equation*}
The estimate \eqref{eq:vasyminitdata} then follows from the triangle inequality.
\end{proof}
\subsection{The time-integral $\psi^{(1)}$ and the time-inverted Newman--Penrose constant $I_{0}^{(1)}$  }
\label{timeinverted}
Let $\psi$ be a smooth spherically symmetric solution to the wave equation \eqref{we} with vanishing Newman--Penrose constant $I_0 [ \psi ] = 0$. In fact, we assume that along $\Sigma_{0}$ we have 
\[\lim_{r\rightarrow \infty}r^3\partial_{r}(r\psi)|_{\Sigma_0} <\infty.\]
The time-integral $\psi^{(1)}$ is defined to be the unique regular solution to the wave equation \eqref{we} such that 
\[\lim_{r\rightarrow \infty}\left.\psi^{(1)}\right|_{\Sigma_{0}}=0 \]
and
\begin{equation}\label{timeintegral}
T \psi^{(1)} = \psi,
\end{equation}
in $\mathcal{R}$. See Proposition 9.1 of \cite{paper2}. 

The time-inverted Newman--Penrose constant $I_{0}^{(1)}[\psi]$ is defined to be the Newman--Penrose constant of the time-integral $\psi^{(1)}$, that is
\[I_{0}^{(1)}[\psi]=I_{0}[\psi^{(1)}].  \]
A calculation (see \cite{paper2}) yields
\begin{equation}
\begin{split}
I_0[\psi^{(1)}]=& -2\lim_{r\to \infty}r^3\partial_{v}\phi|_{\Sigma_{0}}+M R(2-Dh_{\Sigma_{0}}(R))\phi(0,R)+2M\int_{r\geq R}r \partial_{v}\phi\Big|_{\mathcal{N}_{0}}\,dv'\\
&-M\int_{r_{\rm min}}^R 2(1-h_{\Sigma_{0}}D)r\partial_{\rho}\phi-(2-Dh_{\Sigma_{0}})rh_{\Sigma_{0}} T\phi-(r\cdot (Dh_{\Sigma_{0}})')\cdot\phi\Big|_{\Sigma_{0}}\,d\rho',
\end{split}
\label{nptif1}
\end{equation}
where $\phi=r\psi$. In particular, if the initial data for $\psi$ is compactly supported in $\{r_{\text{min}}\leq r<R\}$, we have that
\begin{equation}
\begin{split}
I_0[\psi^{(1)}]=&\:-M\int_{r_{\rm min}}^R 2(1-h_{\Sigma_{0}}D)r\partial_{\rho}\phi-(2-Dh_{\Sigma_{0}})rh_{\Sigma_{0}} T\phi-(r\cdot (Dh_{\Sigma_{0}})')\cdot\phi\Big|_{\Sigma_{0}}\,d\rho'.
\end{split}
\label{npfcompact}
\end{equation}

\subsection{Leading-order asymptotics}
\label{asymptotics}

From \cite{paper2} we have the following leading-order asymptotics for solutions $\psi$ to the wave equation  and their $T$-derivative $T\psi$ in $\{r\geq R\}\cap\{u\geq 0\}$. 
\begin{theorem}
\label{thm:loasymp}
Consider initial data for $\psi$ on the hypesurface $\Sigma_0$ such that $I_0[\psi]<\infty$, ${E}[\psi]<\infty$ and $P[\psi]<\infty$ and for which the assumptions of Section \ref{ivp} hold. Then
there exists $\epsilon>0$ suitably small such that if $\beta>\epsilon$ we have that for all $(u,v)\in \{r\geq R\}\cap\{u\geq 0\}$ we have:
\begin{equation}
\label{eq:loasymppsi}
\left|\psi(u,v)-\frac{4I_0[\psi]}{(u+1)v}\right|\leq C_R(u+1)^{-1-\epsilon}v^{-1}\left[I_0[\psi]+\sqrt{E}[\psi]+P[\psi]\right].\\
\end{equation}
If moreover ${E}_T[\psi]<\infty$ and $P_{T}[\psi]<\infty$, then there exists $\epsilon>0$ suitably small such that if $\beta>\epsilon$ we have that for all $(u,v)\in \{r\geq R\}\cap\{u\geq 0\}$:
\begin{equation}
\label{eq:loasympTpsi}
\left|T\psi(u,v)+\frac{4I_0[\psi]}{(u+1)^2v}\left(1+\frac{u+1}{v}\right)\right|\leq C_R(u+1)^{-2-\epsilon}v^{-1}\left[I_0[\psi]+\sqrt{E_T}[\psi]+P_{T}[\psi]\right].
\end{equation}
\end{theorem}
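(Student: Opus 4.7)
The statement is recalled from \cite{paper2}, and the plan is to reproduce its derivation by combining the $r^p$-weighted energy hierarchy of Dafermos--Rodnianski with conservation of the Newman--Penrose constant, and then converting the resulting flux bounds into pointwise asymptotic profiles via integration along null hypersurfaces. Throughout I would work with $\phi := r\psi$, which in the asymptotic region $\{r\geq R\}$ satisfies the wave-type equation
\[
\partial_u\partial_v\phi = -\frac{D D'}{4r}\phi = -\frac{M}{2r^3}\phi + O(r^{-3-\beta})\phi.
\]
The hypotheses of Section \ref{ivp} supply uniform energy boundedness and local integrated decay, which serve as the seed for the hierarchy.

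The first step is to run the $r^p$-multiplier method on $\phi$ in $\{r\geq R\}$ with multipliers of the form $r^p\partial_v\phi$ for $p\in[0,2)$, producing weighted bounds on $\int_{\mathcal{N}_u} r^p(\partial_v\phi)^2$ whose right-hand sides are controlled by the initial-data quantities $I_0[\psi], E[\psi], P[\psi]$. Pigeonholing across the hierarchy combined with Hardy-type inequalities then yields $\int_{\mathcal{N}_u}(\partial_v\phi)^2 \leq C(u+1)^{-2+\eta}$ for any small $\eta>0$, and finer pointwise bounds such as $|\phi(u,v)|\leq C(u+1)^{-1/2+\eta}v^{-1/2}$ via Sobolev embedding (automatic by spherical symmetry). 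Commuting once with $T$ and re-running the hierarchy with the $E_T[\psi], P_T[\psi]$ norms, while exploiting $I_0[T\psi]=0$, gives the analogous flux decay for $T\psi$ with an extra power of $u$.

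Next I would extract the leading-order profile from the Newman--Penrose conservation law. The wave equation above implies $\partial_u(r^2\partial_v\phi) = O(r^{-1-\beta})\phi$, so integrating in $u$ against the decay of $\phi$ from Step 2 gives $r^2\partial_v\phi(u,v) = \tfrac{1}{2}I_0[\psi] + O((u+1)^{-\epsilon})$ with controllable error in $v$. Integrating once more in $v$ from $v$ to $\mathcal{I}^+$ at fixed $u$, and using $\partial_v r = D/2$ together with Lemma \ref{lm:estr}, produces the identity
\[
\phi(u,v) = \phi|_{\mathcal{I}^+}(u) - \frac{I_0[\psi]}{r(u,v)} + O((u+1)^{-1-\epsilon}).
\]
The trace $\phi|_{\mathcal{I}^+}(u)$ is then pinned down by integrating the asymptotic equation $\partial_u\phi|_{\mathcal{I}^+} = -\lim_{v\to\infty}\int_v^\infty \tfrac{DD'}{4r}\phi\,dv$ in $u$ from the initial datum at $u=0$, giving $\phi|_{\mathcal{I}^+}(u) = 2I_0[\psi]/(u+1)$ up to error. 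Combining with the expansion $r^{-1} = 2(v-u)^{-1} + O((v-u)^{-2}\log(v-u))$ from Lemma \ref{lm:estr} and dividing $\phi$ by $r$ yields the asserted leading-order profile $\psi(u,v) \approx 4I_0[\psi]/((u+1)v)$, establishing \eqref{eq:loasymppsi}. The bound \eqref{eq:loasympTpsi} follows analogously: since $I_0[T\psi] = 0$, the leading asymptotic of $T\psi$ cannot be sourced by a Newman--Penrose term and is instead forced to coincide with the differentiated profile of $\psi$ from Step 3, with residual errors controlled by the $E_T[\psi], P_T[\psi]$ norms.

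The main obstacle is calibrating the $r^p$-hierarchy so that the extracted residual error carries the sharp joint decay $(u+1)^{-1-\epsilon}v^{-1}$, rather than a weaker $(u+1)^{-1-\epsilon} + v^{-1-\epsilon}$-type split. I expect this to require running the hierarchy not only for $\phi$ but simultaneously for the Newman--Penrose remainder $r^2\partial_v\phi - \tfrac{1}{2}I_0[\psi]$, so that the $v^{-1}$ weight in the remainder is traded cleanly against the $u$-decay. It also requires careful control of the boundary contributions at $\{r=R\}$ via the assumed Morawetz bounds, so that decay information in $u$ from the bounded region propagates into $\{r\geq R\}$ without loss; this is what makes the constant $C_R$ depend on $R$ but not on the solution. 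The $T$-commuted case then needs an additional bootstrap layer, since $I_0[T\psi]=0$ forces one to feed the already-derived first-order asymptotic of $\psi$ back into the hierarchy for $T\psi$ in order to see the correct leading profile $\tfrac{4I_0[\psi]}{(u+1)^2 v}(1+\tfrac{u+1}{v})$.
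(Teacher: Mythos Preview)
The paper does not prove Theorem~\ref{thm:loasymp}; it is stated in Section~\ref{asymptotics} as a result imported verbatim from \cite{paper2}, with no argument supplied here. You correctly flag this in your first sentence. Your sketch is a reasonable outline of the strategy actually carried out in \cite{paper1,paper2}---the $r^p$-weighted hierarchy seeded by the assumed energy boundedness and integrated local energy decay, conservation of the Newman--Penrose quantity $r^2\partial_v\phi\to \tfrac{1}{2}I_0$, and then characteristic integration in $u$ and $v$ to convert flux decay into the pointwise profile---and as such goes well beyond what the present paper offers. Since there is no proof in this paper to compare against, the only assessment to make is that your outline is consistent with the cited source; the anticipated difficulty you name (obtaining the joint $(u+1)^{-1-\epsilon}v^{-1}$ weight rather than a split error) is indeed the crux in \cite{paper2}, and is handled there by running the hierarchy on the Newman--Penrose remainder $r^2\partial_v\phi-\tfrac{1}{2}I_0$, as you suspect.
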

Recall that the norms $E,E_T,P,P_T$ are defined in \ref{energynorms}.

\section{Higher-order estimates in the region $\{ r \geq R \}$}\label{ho}
This section concerns estimates about \textit{spherically symmetric} solutions. 
\subsection{Estimates for $\partial_v(r\psi)$ in the case $I_0 [\psi ]  \neq 0$}
\begin{proposition}
\label{prop:asymdvphi}
For all spherically symmetric solutions $\psi$ to the wave equation \eqref{we} on the $(\mathcal{M},g)$ backgrounds of Section \ref{preliminaries} with non-vanishing Newman--Penrose constant $I_0$ there exists a constant $C_R>0$ such that we can estimate in $\{r\geq R\}\cap\{u\geq 0\}$:
\begin{equation}
\label{eq:2ndoasympdvphi}
\begin{split}
&\Bigg|\partial_v(r\psi)(u,v)-2I_0v^{-2}-16M I_0v^{-3}\log v+8MI_0(u+1)v^{-3}(v-u-1)^{-1}\\
&+8MI_0v^{-3} \log \left(\frac{v(u+1)}{v-(u+1)}\right)\Bigg|\\
\leq&\: C_R(I_0+\sqrt{E}+P)\textnormal{Err}_{\beta}(u,v),
\end{split}
\end{equation}
where
\begin{equation*}
\textnormal{Err}_{\beta}(u,v):=v^{-3}+v^{-2-\epsilon}\cdot (v-u-1)^{-1}+v^{-2}\cdot (v-u-1)^{-2+\max\{1-\beta,\eta\}},
\end{equation*}
with $\eta>0$ arbitrarily small.
\end{proposition}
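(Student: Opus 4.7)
The plan is to derive \eqref{eq:2ndoasympdvphi} by integrating the spherically symmetric reduced wave equation along lines of constant $v$ starting from the outgoing hypersurface $\mathcal{N}_0$. Writing $\phi=r\psi$, a short computation using $\partial_v r = D/2$, $\partial_u r = -D/2$, and $\partial_u\partial_v r = -DD'/4$ reduces $\Box_g\psi = 0$ to
\[
\partial_u\partial_v\phi = -\frac{D(r)D'(r)}{4r}\,\phi.
\]
Integrating this identity in $u$ at fixed $v$ starting from $u=0$ gives
\[
\partial_v\phi(u,v) = \partial_v\phi(0,v) - \int_0^u \frac{D(r)D'(r)}{4r}(u',v)\,\phi(u',v)\,du',
\]
and the two summands will supply, respectively, the $v^{-2}$ and $v^{-3}\log v$ terms in \eqref{eq:2ndoasympdvphi} and the two $u$-dependent $v^{-3}$ terms.

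For the initial-data term, I would invoke the Newman--Penrose assumption \eqref{eq:rasyminitdata} along $\mathcal{N}_0$ to write $\partial_r(r\psi)|_{u=0}(r)=I_0 r^{-2}+O(r^{-3})$, then convert via $\partial_v = (D/2)\partial_r$ and expand $D = 1 - 2M/r + O(r^{-2})$ together with the expansion of $r^{-2}|_{u=0}$ from Lemma \ref{lm:estr}. A short computation yields
\[
\partial_v\phi(0,v) = 2I_0 v^{-2} + 16MI_0 v^{-3}\log v + O(v^{-3}),
\]
which accounts for the first two terms of \eqref{eq:2ndoasympdvphi}.

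For the bulk integral, Lemma \ref{lm:estr} combined with the expansion of $D$ gives $D(r)D'(r)/(4r) = 4M(v-u'-1)^{-3}+O((v-u'-1)^{-3-\epsilon})$, and the leading-order asymptotic of Theorem \ref{thm:loasymp} rewrites $\phi(u',v) = 2I_0(v-u'-1)/((u'+1)v) + \textnormal{(error)}$. Substituting both yields the main contribution
\[
-\frac{8MI_0}{v}\int_0^u \frac{du'}{(v-u'-1)^2(u'+1)},
\]
which partial fractions evaluate explicitly as $\frac{u+1}{v^2(v-u-1)} + \frac{1}{v^2}\log\!\left(\frac{v(u+1)}{v-u-1}\right)$ up to terms of size $O(v^{-3})$. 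Multiplying by $-8MI_0/v$ recovers the remaining two terms of \eqref{eq:2ndoasympdvphi}.

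The main obstacle will be the careful tracking of error terms. The $(u+1)^{-1-\epsilon}v^{-1}$ bound on $\psi$ from Theorem \ref{thm:loasymp} becomes an $(u+1)^{-1-\epsilon}$ bound on $\phi$ after multiplication by $r\sim v-u-1$; multiplying by $|V|\sim (v-u')^{-3}$ and integrating in $u'$ produces exactly the $v^{-2-\epsilon}(v-u-1)^{-1}$ component of $\textnormal{Err}_\beta$. The $v^{-2}(v-u-1)^{-2+\max\{1-\beta,\eta\}}$ component absorbs the $O(r^{-2-\beta})$ corrections to $D$ from \eqref{dbehaviour} as well as the remainders in the expansion of $r$ in Lemma \ref{lm:estr}, while the $v^{-3}$ component absorbs the initial-data error and the discrepancies between $(v-u)$ and $(v-u-1)$ in the explicit integration. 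Assembling these contributions yields \eqref{eq:2ndoasympdvphi}.
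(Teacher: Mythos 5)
Your strategy coincides with the paper's proof of Proposition \ref{prop:asymdvphi}: reduce the wave equation to $\partial_u\partial_v\phi=-\tfrac14 DD'\,\psi$, integrate in $u$ from the data cone $\{u=0\}$, substitute the leading-order asymptotics of Theorem \ref{thm:loasymp} to isolate the main bulk contribution $-8MI_0\,v^{-1}(u'+1)^{-1}(v-u'-1)^{-2}$, evaluate its $u'$-integral by an explicit antiderivative (your partial fractions are exactly the paper's function $F(x;v)=v^{-1}(v-x)^{-1}-v^{-2}\log\bigl(\tfrac{v-x}{x}\bigr)$), and expand $\partial_v\phi(0,v)$ using \eqref{eq:rasyminitdata}, $\partial_v=\tfrac{D}{2}\partial_r$ and Lemma \ref{lm:estr}. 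The identification of $\partial_v\phi(0,v)=2I_0v^{-2}+16MI_0v^{-3}\log v+O(v^{-3})$ and of the two $u$-dependent terms is correct.

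The one step that does not work as literally written is the error bookkeeping for the $(u+1)^{-1-\epsilon}$ correction to $\psi$. If you first relax the bound on $\phi$ from $r\cdot v^{-1}(u+1)^{-1-\epsilon}\lesssim (v-u-1)v^{-1}(u+1)^{-1-\epsilon}$ to plain $(u+1)^{-1-\epsilon}$ and then multiply by $|V|\sim (v-u'-1)^{-3}$, the integrand is $(v-u'-1)^{-3}(u'+1)^{-1-\epsilon}$, and splitting the integral at $u'\sim v/2$ yields $v^{-3}+v^{-1-\epsilon}(v-u-1)^{-2}$, \emph{not} $v^{-2-\epsilon}(v-u-1)^{-1}$. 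Near the curve $\{r=R\}$, where $v-u-1$ is bounded, the former is of size $v^{-1-\epsilon}$, which is not dominated by $\textnormal{Err}_{\beta}\sim v^{-2}$ there; so the claimed error term would not follow. The fix is simply not to discard the $v^{-1}$: one power of $(v-u'-1)^{-1}$ from the potential cancels against the factor $r$ in $\phi$ while the $v^{-1}$ coming from the decay of $\psi$ survives, giving the integrand $(v-u'-1)^{-2}\,v^{-1}(u'+1)^{-1-\epsilon}$ (the paper's $J_2$), whose integral is indeed $\lesssim v^{-3}+v^{-2-\epsilon}(v-u-1)^{-1}$. With that correction your argument is complete and identical in substance to the paper's.
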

\begin{proof}
We use \eqref{we} and \eqref{dbehavior} together with the estimates in Lemma \ref{lm:estr} to obtain:
\begin{equation*}
\begin{split}
\partial_u\partial_v(r\psi)(u,v)=&-\frac{1}{4}DD'\cdot \psi\\
=&-\frac{1}{4}\left(1-\frac{2M}{r}+O_3(r^{-1-\beta})\right)\cdot\left(\frac{2M}{r^2}+O_2(r^{-2-\beta})\right)\cdot \psi\\
=&\:\left(-\frac{1}{2}Mr^{-2}+O_2(r^{-2-\beta})\right)\cdot \psi\\
=&\:\Big[-2M(v-u-1)^{-2}+O((v-u-1)^{-2-\beta})\\
&+\log(v-u-1)O((v-u-1)^{-3})\Big]\cdot \psi\\
=&\:\Big[-2M(v-u-1)^{-2}+O((v-u-1)^{-3+\max\{1-\beta,\eta\}})\Big]\cdot \psi,
\end{split}
\end{equation*}
with $\eta>0$ arbitrarily small.

By \eqref{eq:loasymppsi} we can therefore estimate
\begin{equation*}
\begin{split}
\left|\partial_u\partial_v(r\psi)(u,v)+\frac{8MI_0}{v(u+1)}(v-u-1)^{-2}\right|\leq &\:C_RI_0(v-u-1)^{-3+\max\{1-\beta,\eta\}}v^{-1}(u+1)^{-1}\\
&+ C_R(I_0+\sqrt{E}+P)(v-u-1)^{-2}v^{-1}(u+1)^{-1-\epsilon}.
\end{split}
\end{equation*}

By applying the fundamental theorem of calculus in $u$, we have that
\begin{equation*}
\begin{split}
\Big|\partial_v(r\psi)(u,v)&-\partial_v(r\psi)(0,v)+8MI_0v^{-1}\int_0^u (u'+1)^{-1}(v-u'-1)^{-2}\,du'\Big|\\
\leq&\:  C_R(I_0+\sqrt{E}+P) (J_1(u,v)+J_2(u,v)),
\end{split}
\end{equation*}
where
\begin{align*}
J_1(u,v):=&\:v^{-1}\int_0^u (v-(u'+1))^{-3+\max\{1-\beta,\eta\}} (u'+1)^{-1}\,du',\\
J_2(u,v):=&\:v^{-1}\int_0^u (v-(u'+1))^{-2} (u'+1)^{-1-\epsilon}\,du'.
\end{align*}
In order to estimate the integrals $J_1(u,v)$ and $J_2(u,v)$ we consider the curve $\gamma=\{ v'=2R^*+2u' \}$ depicted below in Figure \ref{fig:rgeqR}:
\begin{figure}[H]
\begin{center}
\includegraphics[width=3in]{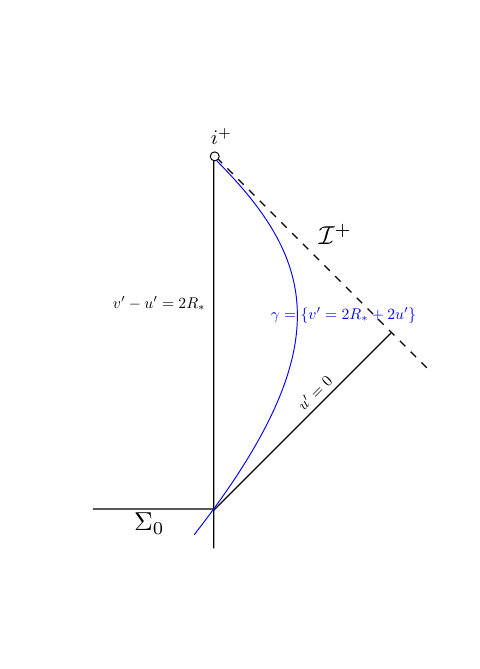}
\caption{\label{fig:rgeqR} The region $\{r\geq R\}\cap\{u\geq 0\}$ and the curve $\gamma$.}
\end{center}
\end{figure}
\underline{\textbf{Step 1:} Estimating $J_1(u,v)$ and $J_2(u,v)$}\\
\\
We estimate $J_1(u,v)$ and $J_2(u,v)$ by partitioning the integration range as $[0,u]=[0,u_{\gamma}(v)]\cup (u_{\gamma}(v),u]$, where $u_{\gamma}(v)=\frac{v}{2}-R_*$, and applying the estimates in Lemma \ref{lm:estr} corresponding to the intervals of the partition. 

We obtain:
\begin{equation*}
\begin{split}
J_1(u,v)=&\: v^{-1}\int_0^u (v-(u'+1))^{-3+\max\{1-\beta,\eta\}}(u'+1)^{-1}\,du'\\
=&\:v^{-1}\int_0^{\frac{v}{2}-R_*} (v-(u'+1))^{-3+\max\{1-\beta,\eta\}}(u'+1)^{-1}\,du'\\
&+v^{-1}\int_{\frac{v}{2}-R_*}^u (v-(u'+1))^{-3+\max\{1-\beta,\eta\}}(u'+1)^{-1}\,du'\\
\leq&\: C_Rv^{-4+\max\{1-\beta,\eta\}}\log v+ C_Rv^{-2}\cdot (v-u-1)^{-2+\max\{1-\beta,\eta\}},
\end{split}
\end{equation*}
and similarly,
\begin{equation*}
\begin{split}
J_2(u,v)=&\: v^{-1}\int_0^u (v-(u'+1))^{-2}(u'+1)^{-1-\epsilon}\,du'\\
=&\:v^{-1}\int_0^{\frac{v}{2}-R_*} (v-(u'+1))^{-2}(u'+1)^{-1-\epsilon}\,du'\\
&+v^{-1}\int_{\frac{v}{2}-R_*}^u (v-(u'+1))^{-2}(u'+1)^{-1-\epsilon}\,du'\\
\leq&\: C_Rv^{-3}+ C_Rv^{-2-\epsilon}\cdot (v-u-1)^{-1}.
\end{split}
\end{equation*}
\\
\underline{\textbf{Step 2:} Evaluating $\int_0^u (u'+1)^{-1}(v-u'-1)^{-2}\,du'$}\\
\\
We can write:
\begin{equation*}
8MI_0 v^{-1}\int_0^u (u'+1)^{-1}(v-(u'+1))^{-2}\,du'=8MI_0 v^{-1}\int_1^{u+1} x^{-1}(v-x)^{-2}\,dx,
\end{equation*}
where $x=u'+1$.

Let
\begin{equation*}
F(x;v)=v^{-1}\cdot (v-x)^{-1}-v^{-2}\cdot \log\left(\frac{v-x}{x}\right),
\end{equation*}
then
\begin{equation*}
\begin{split}
F'(x;v)=&\:v^{-1}\cdot (v-x)^{-2}-v^{-2}\cdot\frac{x}{v-x} \left(-vx^{-2}\right)\\
=&\:x^{-1}(v-x)^{-2}.
\end{split}
\end{equation*}
And hence,
\begin{equation*}
\begin{split}
8MI_0& v^{-1}\int_0^u (u'+1)^{-1}(v-(u'+1))^{-2}\,du'\\
=&\:8MI_0v^{-1} F(x;v)\big|^{x=u+1}_{x=1}\\
=&\:8MI_0v^{-2} \left[((v-(u+1))^{-1}-v^{-1}\cdot  \log\left(\frac{v-(u+1)}{u+1}\right)-(v-1)^{-1}+v^{-1}\cdot  \log\left(v-1\right)\right]\\
=&\:8MI_0v^{-2}\left[\frac{u+1}{v(v-(u+1))}-v^{-1}\cdot  \log\left(\frac{v-(u+1)}{v(u+1)}\right)\right]+O(v^{-4}),
\end{split}
\end{equation*}
where we used that
\begin{equation*}
(v-(u+1))^{-1}-v^{-1}=\frac{u+1}{v(v-(u+1))}
\end{equation*}
and
\begin{equation*}
v^{-2}\left(v^{-1}-(v-1)^{-1}+v^{-1}\log\left(\frac{v-1}{v}\right)\right)=O(v^{-4}).
\end{equation*}
By combining the above expression with the estimates for $J_1$ and $J_2$, we arrive at:
\begin{equation*}
\begin{split}
\Bigg|\partial_v(r\psi)(u,v)&-\partial_v(r\psi)(0,v)+8MI_0(u+1)v^{-3}(v-u-1)^{-1}+8MI_0v^{-3} \log \left(\frac{v(u+1)}{v-(u+1)}\right)\Bigg|\\
\leq&\:  C_R(I_0+\sqrt{E}+P)[v^{-3}+v^{-2-\epsilon}\cdot (v-u-1)^{-1}+v^{-2}\cdot (v-u-1)^{-2+\max\{1-\beta,\eta\}}].
\end{split}
\end{equation*}
Finally, we obtain \eqref{eq:2ndoasympdvphi} by applying \eqref{eq:vasyminitdata}:
\begin{equation*}
\left|\partial_v(r\psi)(0,v)-2I_0v^{-2}-16MI_0v^{-3}\log v\right|\leq C_R\cdot (P[\psi]+I_0).
\end{equation*}
\end{proof}

\subsection{Estimate for $r\psi$ in the case $I_0 [\psi ] \neq 0$}
We now integrate the equation \eqref{eq:2ndoasympdvphi} starting from the curve $\{r=R\}$ to the hypersurface $\{v'=v\}$.
\begin{proposition}
\label{prop:asymphi}For all spherically symmetric solutions $\psi$ to the wave equation \eqref{we} on the $(\mathcal{M},g)$ backgrounds of Section \ref{preliminaries} with non-vanishing Newman--Penrose constant $I_0$  there exists a constant $C_R>0$ such that we can estimate in $\{r\geq R\}\cap\{u\geq 0\}$:
\begin{equation}
\label{eq:asymphi}
\begin{split}
\Bigg|r\psi(u,v)&-2I_0[(u+1)^{-1}-v^{-1}]+4MI_0(u+1)^{-2}\log(u+1)-4MI_0v^{-2}\log(u+1)\\
&+8MI_0v^{-2}\log v+4MI_0[(u+1)^{-2}+v^{-2}]\log\left(\frac{v-u-1}{v}\right)\Bigg|\\
\leq&\:  C_R(I_0+\sqrt{E}+P)(u+1)^{-2}.
\end{split}
\end{equation}
\end{proposition}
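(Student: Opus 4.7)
The plan is to apply the fundamental theorem of calculus in $v$, integrating the pointwise estimate \eqref{eq:2ndoasympdvphi} for $\partial_v(r\psi)(u,v')$ from the curve $\{r=R\}$ (on which $v' = u + 2R_*$) outward to $v'=v$. That is, we write
\begin{equation*}
r\psi(u,v) = r\psi(u, u + 2R_*) + \int_{u+2R_*}^{v} \partial_v(r\psi)(u,v')\,dv'.
\end{equation*}
At the lower endpoint the area-radius is the bounded constant $R$, so the leading-order estimate \eqref{eq:loasymppsi} immediately gives $|r\psi(u, u+2R_*)| \leq C_R(I_0+\sqrt{E}+P)(u+1)^{-2}$, which is absorbed into the right-hand side of \eqref{eq:asymphi}.

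The bulk of the argument is computing the $v'$-integral of each of the four explicit terms in \eqref{eq:2ndoasympdvphi}. The first two, $2I_0 v'^{-2}$ and $16MI_0 v'^{-3}\log v'$, are elementary, with antiderivatives $-2I_0/v'$ and $-8MI_0 v'^{-2}\log v' - 4MI_0 v'^{-2}$ respectively. For the singular third term $-8MI_0(u+1)v'^{-3}(v'-u-1)^{-1}$, the key observation is
\begin{equation*}
\frac{u+1}{v'(v'-u-1)} = \frac{d}{dv'}\log\frac{v'-u-1}{v'},
\end{equation*}
so that either integration by parts or the partial-fraction expansion of $v'^{-3}(v'-u-1)^{-1}$ yields a clean antiderivative containing a $-\frac{8MI_0}{(u+1)^2}\log\frac{v'-u-1}{v'}$ piece. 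The fourth term we split as $-8MI_0 v'^{-3}[\log v' + \log(u+1) - \log(v'-u-1)]$; its first two sub-pieces are elementary (with $\log(u+1)$ treated as a constant in $v'$), while the third requires integration by parts, reducing to a second partial-fraction integral $\int v'^{-2}(v'-u-1)^{-1}\,dv'$.

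We then evaluate all antiderivatives at the two endpoints. The upper endpoint $v' = v$ directly produces the $v^{-k}\log(\cdot)$ terms of \eqref{eq:asymphi}; at the lower endpoint $v' = u + 2R_*$ we Taylor-expand $(u+2R_*)^{-k} = (u+1)^{-k} + O((u+1)^{-k-1})$ and $\log(u+2R_*) = \log(u+1) + O((u+1)^{-1})$ to convert the lower-endpoint contributions into $c\cdot(u+1)^{-k}\log(u+1)$ plus acceptable $O((u+1)^{-2})$ errors. The principal obstacle is careful coefficient bookkeeping: several of the target coefficients arise only after combining contributions from multiple sources. For instance, the $v^{-2}\log v$ coefficient $-8MI_0$ in \eqref{eq:asymphi} is obtained by combining the $-4MI_0$ net contribution of the $v'^{-3}\log v'$ integrals with a further $-4MI_0$ produced by decomposing $\log(v-u-1) = \log v + \log\frac{v-u-1}{v}$ inside the integrated $v'^{-3}\log(v'-u-1)$ piece; analogous cancellations produce the remaining coefficients $-4MI_0$ for $(u+1)^{-2}\log(u+1)$, $+4MI_0$ for $v^{-2}\log(u+1)$, and $-4MI_0$ for each of $(u+1)^{-2}\log\frac{v-u-1}{v}$ and $v^{-2}\log\frac{v-u-1}{v}$.

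Finally, the integral of $\textnormal{Err}_\beta(u,v')$ over $v' \in [u+2R_*, v]$ needs to be bounded by $C_R(I_0+\sqrt E+P)(u+1)^{-2}$. For this we invoke Lemma \ref{lm:estr} to trade $(v-u-1)$- and $v$-weights for $(u+1)$-weights in the bulk of the integration range, exploiting $\epsilon, \eta > 0$ small and $\beta$ sufficiently close to $1$; a direct estimate of each of the three components of $\textnormal{Err}_\beta$ then shows their integrals are each $O((u+1)^{-2})$, completing the plan.
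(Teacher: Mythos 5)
Your proposal is correct and follows essentially the same route as the paper: the fundamental theorem of calculus in $v$ starting from the curve $\{r=R\}$, explicit antiderivatives for the four terms of \eqref{eq:2ndoasympdvphi} (your splitting of the logarithm in the fourth term into $\log v'+\log(u+1)-\log(v'-u-1)$ is only a cosmetic reorganization of the paper's single antiderivative $F_3$), Taylor expansion of the lower-endpoint contributions, and a partition of the integration range to bound the integral of $\textnormal{Err}_\beta$; your final coefficient bookkeeping matches \eqref{eq:asymphi}, and you even treat the boundary term $r\psi(u,u+2R_*)$ explicitly, which the paper leaves implicit. The only remark worth making is that the error integrals already converge for any $\beta>\epsilon>0$ as assumed, so no additional restriction of $\beta$ being ``close to $1$'' is needed.
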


\begin{proof}
We apply the fundamental theorem of calculus in the $v$-direction to obtain:
\begin{equation*}
r\psi(u,v)=r\psi(u,u+2R_*)+\int_{u+2R_*}^{v}\partial_v(r\psi)(u,v')\,dv'.
\end{equation*}
Furthermore, by \eqref{eq:2ndoasympdvphi} we can estimate
\begin{equation*}
\begin{split}
&\left|\int_{u+2R_*}^{v}\partial_v(r\psi)(u,v')\,dv'-\int_{u+2R_*}^{v} 2I_0v'^{-2}+16MI_0v'^{-3}\log v'\,dv'+J_3(u,v)+J_4(u,v)\right|\\
\lesssim&\: \int_{u+2R_*}^{v} \textnormal{Err}_{\beta}(u,v')\,dv',
\end{split}
\end{equation*}
where
\begin{align*}
J_3(u,v)=&\:8MI_0(u+1)\int_{u+2R_*}^{v}v'^{-3}(v'-u-1)^{-1}\,dv',\\
J_4(u,v)=&\:8MI_0\int_{u+2R_*}^{v}v'^{-3} \log \left(\frac{v'(u+1)}{v'-(u+1)}\right)\,dv'.
\end{align*}
First of all, we have that
\begin{equation*}
\left|-\int_{u+2R_*}^{v} 2I_0v'^{-2}\,dv'+2I_0[(u+1)^{-1}-v^{-1}]\right|\leq CI_0(u+1)^{-2}.
\end{equation*}
Let
\begin{equation*}
F_1(v):=-\frac{1}{2}v^{-2}\log v-\frac{1}{4}v^{-2}.
\end{equation*}
Then
\begin{equation*}
F_1'(v):=v^{-3}\log v
\end{equation*}
and hence
\begin{equation*}
\begin{split}
-\int_{u+2R_*}^{v} 16MI_0v'^{-3}\log v'\,dv'=&-16MI_0F_1(v)+16MI_0F_1(u+2R_*)\\
=&\:-8MI_0[(u+2R_*)^{-2}\log(u+2R_*)-v^{-2}\log v].
\end{split}
\end{equation*}
and we can conclude that
\begin{equation*}
\begin{split}
\left|-\int_{u+2R_*}^{v} 16MI_0v'^{-3}\log v'\,dv'+8MI_0[(u+1)^{-2}\log(u+1)-v^{-2}\log v]\right|\leq CI_0(u+1)^{-2}.
\end{split}
\end{equation*}
\\\
\\
\underline{\textbf{Step 1:} Estimating the integral of  $\textnormal{Err}_{\beta}(u,v)$}\\
\\
We immediately obtain:
\begin{equation*}
\int_{u+2R_*}^{v}v'^{-3}\,dv'\leq C_R ((u+1)^{-2}-v^{-2}).
\end{equation*}

We estimate the remaining terms in the integral of $\textnormal{Err}_{\beta}(u,v)$ by partitioning the $v$-integration range as $[u+2R_*,\infty)=[u+2R_*,2u+2R_*]\cup (2u+2R_*,\infty)$, where we assume without loss of generality that $v\geq 2u+2rR_*$, and making use of the estimates in Lemma \ref{lm:estr}. 

We first estimate:
\begin{equation*}
\begin{split}
\int_{u+2R_*}^v v'^{-2-\epsilon}\cdot (v'-u-1)^{-1}\,dv'=&\:\int_{u+2R_*}^{2u+2R_*} v'^{-2-\epsilon}\cdot (v'-u-1)^{-1}\,dv'\\
&+\int_{2u+2R_*}^{v} v'^{-2-\epsilon}\cdot (v'-u-1)^{-1}\,dv'\\
\lesssim&\: (u+1)^{-2-\epsilon}\cdot \log (u+1)+(u+1)^{-2-\epsilon}\\
\lesssim&\: (u+1)^{-2-\epsilon}\cdot \log (u+1).
\end{split}
\end{equation*}

Furthermore,
\begin{equation*}
\begin{split}
\int_{u+2R_*}^{v} v'^{-2}\cdot (v'-u-1)^{-2+\max\{1-\beta,\eta\}}\,dv'=&\:\int_{u+2R_*}^{2u+2R_*} v'^{-2}\cdot (v'-u-1)^{-2+\max\{1-\beta,\eta\}}\,dv'\\
&+\int_{2u+2R_*}^{v} v'^{-2}\cdot (v'-u-1)^{-2+\max\{1-\beta,\eta\}}\,dv'\\
\lesssim&\: (u+1)^{-2}+(u+1)^{-3+\max\{1-\beta,\eta\}}\\
\lesssim&\:  (u+1)^{-2},
\end{split}
\end{equation*}
for $\eta>0$ suitably small.
 \\
 \\
\underline{\textbf{Step 2:} Evaluating $J_3$}\\
\\
Let
\begin{equation*}
F_2(v;c)=c^{-3}\log(1-cv^{-1})+c^{-2}v^{-1}+\frac{1}{2}v^{-2}c^{-1}.
\end{equation*}
Then
\begin{equation*}
\begin{split}
\frac{dF_2}{dv}(v;c)=&\:c^{-3}\frac{v}{v-c}\cdot cv^{-2}-c^{-2}v^{-2}-v^{-3}c^{-1}\\
=&\:\frac{v^2-v(v-c)-c(v-c)}{c^2(v-c)v^3}\\
=&\: v^{-3}(v-c)^{-1}.
\end{split}
\end{equation*}
Hence,
\begin{equation*}
\begin{split}
J_3(u,v)=&\:8MI_0(u+1)\int_{u+2R_*}^{v}v'^{-3}(v'-u-1)^{-1}\,dv'=8MI_0(u+1)F_2(v';u+1)\big|^{v'=v}_{v'=u+2R_*}\\
=&\:8MI_0\left[(u+1)^{-2} \log\left(\frac{u+2R_*}{2R_*-1}\right)-(u+1)^{-1}(u+2R_*)^{-1}-\frac{1}{2}(u+2R_*)^{-2}(u+1)^{-1}\right]\\
&+8MI_0\left[(u+1)^{-2}\log\left(\frac{v-u-1}{v}\right)+(u+1)^{-1}v^{-1}+\frac{1}{2}v^{-2}\right]\\
=&\:8MI_0(u+1)^{-2} \left[\log(u+1)+\log\left(\frac{v-u-1}{v}\right)\right]+O((u+1)^{-2}).
\end{split}
\end{equation*}\\
\\
\underline{\textbf{Step 3:} Evaluating $J_4$}\\
\\
Now, let
\begin{equation*}
F_3(v;c)=-\frac{1}{2}v^{-2}\log\left(\frac{cv}{v-c}\right)+\frac{1}{2}c^{-2}\log\left(\frac{v}{v-c}\right)-\frac{1}{4}v^{-2}-\frac{1}{2}c^{-1}v^{-1}.
\end{equation*}
Then,
\begin{equation*}
\begin{split}
\frac{dF_3}{dv}(v;c)=&\:v^{-3}\log\left(\frac{cv}{v-c}\right)-\frac{1}{2}v^{-2}\cdot \left(\frac{v-c}{cv}\right)\cdot\left(-\frac{c^2}{(v-c)^2}\right)\\
&+\frac{1}{2}c^{-2}\frac{v-c}{v}\cdot\left(-\frac{c}{(v-c)^2}\right)+\frac{1}{2}v^{-3}+\frac{1}{2}c^{-1}v^{-2}\\
&=\:v^{-3}\log\left(\frac{cv}{v-c}\right)+\frac{1}{2}(v-c)^{-1}v^{-3}c^{-1}(c^2-v^2+c(v-c)+v(v-c))\\
&=\:v^{-3}\log\left(\frac{cv}{v-c}\right)
\end{split}
\end{equation*}
and we can write
\begin{equation*}
\begin{split}
J_4(u,v)=&\: 8MI_0\int_{u+2R_*}^{v}v'^{-3} \log \left(\frac{v'(u+1)}{v'-(u+1)}\right)\,dv'=8MI_0 F_3(v';u+1)|^{v'=v}_{v'=u+2R_*}\\
=&\:8MI_0\left[\frac{1}{2}(u+2R_*)^{-2}\log\left(\frac{(u+1)(u+2R_*)}{2R_*-1}\right)-\frac{1}{2}(u+1)^{-2}\log\left(\frac{u+2R_*}{2R_*-1}\right)\right]\\
&8MI_0\left[-\frac{1}{2}v^{-2}\log\left(\frac{(u+1)v}{v-u-1}\right)+\frac{1}{2}(u+1)^{-2}\log\left(\frac{v}{v-u-1}\right)\right]\\
&+O((u+1)^{-2})\\
=&\:4MI_0[(u+1)^{-2}-v^{-2}]\log(u+1)-4MI_0[(u+1)^{-2}-v^{-2}]\log\left(\frac{v-u-1}{v}\right)\\
&+O((u+1)^{-2}).
\end{split}
\end{equation*}

Hence,
\begin{equation*}
\begin{split}
J_3(u,v)+J_4(u,v)=&\:12MI_0(u+1)^{-2}\log(u+1)-4MI_0v^{-2}\log(u+1)\\
&+4MI_0[(u+1)^{-2}+v^{-2}]\log\left(\frac{v-u-1}{v}\right).
\end{split}
\end{equation*}

By combining the above estimates, we arrive at \eqref{eq:asymphiinfty}.
\end{proof}

\subsection{Estimates for $T(r\psi )$ in the case $I_0 [\psi ] \neq 0$}
\begin{proposition}For all spherically symmetric solutions $\psi$ to the wave equation \eqref{we} on the $(\mathcal{M},g)$ backgrounds of Section \ref{preliminaries} with non-vanishing Newman--Penrose constant $I_0$  there  exists a constant $C_R>0$ such that we can estimate in $\{r\geq R\}\cap\{u\geq 0\}$:
\begin{equation}
\label{eq:2ndasympTphi}
\begin{split}
\Bigg|&T(r\psi)(u,v)+2I_0[(u+1)^{-2}-v^{-2}]-8MI_0(u+1)^{-3}\log(u+1)-16MI_0v^{-3}\log v\\
&+8MI_0(u+1)(v^{-3}-(u+1)^{-3})(v-u-1)^{-1}+8MI_0(u+1)^{-3}\log \left(\frac{v}{v-(u+1)}\right)\\
&+8MI_0v^{-3}\log \left(\frac{v(u+1)}{v-(u+1)}\right)\Bigg|\\
\leq&\: C_R\left(I_0+P_T+\sqrt{E_T}\right)(u+1)^{-3}.
\end{split}
\end{equation}
\label{propot}
\end{proposition}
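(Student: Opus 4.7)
The plan is to mirror the proofs of Propositions \ref{prop:asymdvphi} and \ref{prop:asymphi}, applied now to the solution $T\psi$ in place of $\psi$. Since $T$ is Killing, $T\psi$ is itself a spherically symmetric solution of \eqref{we}; moreover, as $r$ depends only on $v-u$ via $v-u=2r^*$, we have $T(r)=0$ and hence $T(r\psi)=rT\psi$. The leading-order asymptotic \eqref{eq:loasympTpsi} for $T\psi$ takes over the role played by \eqref{eq:loasymppsi} for $\psi$ in the proof of Proposition \ref{prop:asymdvphi}, which also explains why the norms $E_T[\psi], P_T[\psi]$ (rather than $E[\psi], P[\psi]$) appear in the error of \eqref{eq:2ndasympTphi}.

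Step 1: I derive an analogue of \eqref{eq:2ndoasympdvphi} for $\partial_v T(r\psi)=\partial_v(rT\psi)$. Starting from the wave equation
\begin{equation*}
\partial_u\partial_v(rT\psi) = -\tfrac{1}{4}DD'\,T\psi,
\end{equation*}
and substituting Lemma \ref{lm:estr}, the expansion $-\tfrac{1}{4}DD'=-\tfrac{1}{2}Mr^{-2}+O_2(r^{-2-\beta})$, and \eqref{eq:loasympTpsi}, one produces the explicit leading expression
\begin{equation*}
\partial_u\partial_v(rT\psi) = \frac{8MI_0}{(v-u-1)^2}\bigg(\frac{1}{(u+1)^{2}v}+\frac{1}{(u+1)v^{2}}\bigg)+\textnormal{error},
\end{equation*}
where the error is of the same shape as $\textnormal{Err}_\beta$ multiplied by an additional $(u+1)^{-1-\epsilon}$ factor. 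Integrating in $u'$ from $0$ to $u$ and splitting the range as $[0,\tfrac{v}{2}-R_*]\cup[\tfrac{v}{2}-R_*,u]$ along the curve $\gamma$ exactly as for $J_1, J_2$, the two explicit $u$-integrals are evaluated in closed form by partial-fraction antiderivatives, producing structures of the form $(v-u-1)^{-1}(u+1)^{-2}v^{-1}$ and $v^{-3}\log((u+1)/(v-u-1))$ together with their $v^{-1}\leftrightarrow(u+1)^{-1}$ counterparts. The initial-data term $\partial_v(rT\psi)(0,v)$, controlled by $P_T[\psi]$ and the $O(v^{-3})$ initial decay, contributes only to the error.

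Step 2: I integrate the resulting asymptotic for $\partial_v T(r\psi)$ in $v$ from the curve $\{r=R\}$:
\begin{equation*}
T(r\psi)(u,v) = T(r\psi)(u,u+2R_*) + \int_{u+2R_*}^{v}\partial_v T(r\psi)(u,v')\,dv'.
\end{equation*}
The boundary term $T(r\psi)(u,u+2R_*)=R\,T\psi(u,u+2R_*)$ is $O(I_0(u+1)^{-3})$ by \eqref{eq:loasympTpsi} and is absorbed in the error. The remaining $v'$-integrals are evaluated using the elementary antiderivatives analogous to $F_1, F_2, F_3$ of Proposition \ref{prop:asymphi}; they produce precisely the logarithmic terms $8MI_0(u+1)^{-3}\log(u+1)$, $16MI_0 v^{-3}\log v$, $8MI_0(u+1)^{-3}\log(v/(v-u-1))$, $8MI_0 v^{-3}\log(v(u+1)/(v-u-1))$, and the rational correction involving $(v-u-1)^{-1}$ appearing in \eqref{eq:2ndasympTphi}.

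The main obstacle lies in the careful bookkeeping of logarithmic and polynomial contributions throughout both integrations. The leading asymptotic \eqref{eq:loasympTpsi} contains two pieces of different decay type, doubling the number of elementary $u$- and $v$-integrals compared with Propositions \ref{prop:asymdvphi}--\ref{prop:asymphi}; the several logarithmic terms arising from $\log v$, $\log(u+1)$, and $\log((v-u-1)/v)$ must be tracked through both integrations and combined, possibly with non-trivial cancellations, in order to match the exact coefficients displayed in \eqref{eq:2ndasympTphi}. The $(u+1)^{-\epsilon}$ loss in \eqref{eq:loasympTpsi} and the initial-data term $\partial_v(rT\psi)(0,v)$ must then be verified to remain bounded by $C_R(I_0+P_T+\sqrt{E_T})(u+1)^{-3}$.
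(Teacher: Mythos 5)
Your identification of the explicit leading terms is plausible, but the error control in your scheme breaks down, and the paper's proof is structured precisely to avoid this. The trouble is already in your Step 1. The improvement of \eqref{eq:loasympTpsi} over \eqref{eq:loasymppsi} is an extra factor $(u+1)^{-1}$, which is worthless near the initial hypersurface: after integrating the error of $\partial_u\partial_v(rT\psi)+\tfrac14DD'\,T\psi$ in $u'$ over $[0,u]$, the contribution of $u'=O(1)$ is already
\begin{equation*}
\int_0^{1}(v-u'-1)^{-2}v^{-1}(u'+1)^{-2-\epsilon}\,du'\;\approx\; v^{-3},
\end{equation*}
with no gain in $u$. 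Hence the error in your expansion of $\partial_v(rT\psi)(u,v)$ still contains a term of size $v^{-3}$ (not $(u+1)^{-1-\epsilon}$ times $\textnormal{Err}_\beta$, as you assert). The initial-data term has the same defect: $\partial_v(rT\psi)(0,v)=\partial_v^2\phi(0,v)-\tfrac14DD'\psi(0,v)$, where $P_T$ controls $\partial_v^2\phi(0,v)+4I_0v^{-3}$ only up to $O(v^{-3-\beta})$ (and $\beta$ may be $<1$), while \eqref{eq:loasymppsi} gives no expansion of $\psi(0,v)$ beyond $4I_0v^{-1}+O(v^{-1})$, so $-\tfrac14DD'\psi(0,v)$ is only known up to $O(v^{-3})$. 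In your Step 2 these $O(v^{-3})$ errors are integrated in $v'$ over $[u+2R_*,v]$ and produce $O((u+1)^{-2})$, which swamps the $8MI_0(u+1)^{-3}\log(u+1)$ term you are trying to extract. As described, the scheme yields at best an $(u+1)^{-2}$ error, i.e.\ nothing beyond Proposition \ref{prop:asymphi}; the verification you defer to your final paragraph in fact fails.

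The paper never re-runs the $u$-integration for $T\psi$. It writes $\partial_vT\phi=\partial_v^2\phi+\partial_u\partial_v\phi$ and integrates $\partial_v^2\phi$ in $v$ exactly by the fundamental theorem of calculus, so that piece reduces to boundary values $\partial_v\phi(u,v)$ and $\partial_v\phi(u,v_R(u))$ already controlled to the required accuracy by Proposition \ref{prop:asymdvphi}; only $\int_{v_R(u)}^v\partial_u\partial_v\phi\,dv'=-\tfrac14\int DD'\psi\,dv'$ remains, anchored at $r=R$ where $u$ is large. Even there a direct estimate gives only $O((u+1)^{-2-\epsilon})$ on the near-diagonal range $v'\in[u+2R_*,\tfrac32u+2R_*]$; the missing power is recovered by pairing this integral with $\int_{u/2}^u\partial_u\partial_v\phi(u',v_R(u))\,du'$ and rewriting the difference as a spacetime double integral of $T\left(-\tfrac14DD'\psi+8MI_0(v-u-1)^{-2}v^{-1}(u+1)^{-1}\right)$ via \eqref{eq:spacetimeint} --- and it is only there, acting on a region where $u'\sim u$ is large, that \eqref{eq:loasympTpsi} delivers the extra $(u+1)^{-1}$. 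To salvage your approach you would need an analogous device; a straightforward commutation with $T$ followed by the two integrations of Propositions \ref{prop:asymdvphi}--\ref{prop:asymphi} does not close.
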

\begin{proof}
Let $(u,v)\in \{r\geq R\}\cap\{u\geq 0\}$ and denote $v_R(u)=u+2R_*$ and $\phi:=r\psi$. Then we can apply the fundamental theorem of calculus in $v$ together with the identity $T=\partial_u+\partial_v$ to obtain:
\begin{equation*}
\begin{split}
T\phi(u,v)=&\:T\phi(u,v_R(u))+\int_{v_R(u)}^v \partial_vT\phi(u,v')\,dv'\\
=&\:T\phi(u,v_R(u))+\int_{v_R(u)}^v \partial_v^2\phi(u,v')\,dv'+\int_{v_R(u)}^v \partial_u\partial_v\phi(u,v')\,dv'\\
=&\:T\phi(u,v_R(u))+\partial_v\phi(u,v)-\partial_v\phi(u,v_R(u))+\int_{v_R(u)}^v \partial_u\partial_v\phi(u,v')\,dv'.
\end{split}
\end{equation*}
By applying moreover the fundamental theorem of calculus in $u$, we can rewrite the above equation to obtain:
\begin{equation*}
\begin{split}
T\phi(u,v)=&\:T\phi(u,v_R(u))+\partial_v\phi(u,v)-\partial_v\phi(u_{\gamma}(v_R(u)),v_R(u))-\int_{u_{\gamma}(v_R(u))}^u \partial_u\partial_v\phi(u',v_R(u))\,du'\\
&+\int_{v_R(u)}^v \partial_u\partial_v\phi(u,v')\,dv',
\end{split}
\end{equation*}
where $u_{\gamma}(v_R(u))=\frac{1}{2}v_R(u)-R_*=\frac{u}{2}$.

Note that $r_*(\frac{u}{2},2R_*+u)=\frac{u}{4}+R_*$. We now split the $v$ integral as follows (see also Figure \ref{fig:rgeqR2}):
\begin{equation*}
\begin{split}
\int_{v_R(u)}^v \partial_u\partial_v\phi(u,v')\,dv'=&\:\int_{v_R(u)}^{u+2(\frac{u}{4}+R_*)} \partial_u\partial_v\phi(u,v')\,dv'+\int^{v}_{u+2(\frac{u}{4}+R_*)} \partial_u\partial_v\phi(u,v')\,dv'\\
&\:=\int_{2R_*+u}^{2R_*+\frac{3}{2}u} \partial_u\partial_v\phi(u,v')\,dv'+\int^{v}_{2R_*+\frac{3}{2}u} \partial_u\partial_v\phi(u,v')\,dv'.
\end{split}
\end{equation*}

We can then write:
\begin{equation*}
\begin{split}
T\phi(u,v)=&\:T\phi(u,v_R(u))+\partial_v\phi(u,v)-\partial_v\phi\left(\frac{u}{2},u+2R_*\right)+\int^{v}_{\frac{3}{2}u+2R_*} \partial_u\partial_v\phi(u,v')\,dv'\\
&+\int_{2R_*+u}^{\frac{3}{2}u+2R_*} \partial_u\partial_v\phi(u,v')\,dv'-\int_{\frac{u}{2}}^u \partial_u\partial_v\phi(u',v_R(u))\,du'.
\end{split}
\end{equation*}

We further decompose:
\begin{equation*}
\begin{split}
\int^{v}_{\frac{3}{2}u+2R_*} \partial_u\partial_v\phi(u,v')\,dv'=&\:\int^{v}_{\frac{3}{2}u+2R_*} -8MI_0(v'-u-1)^{-2}v'^{-1}(u+1)^{-1}(u,v')\,dv'\\
&+\int^{v}_{\frac{3}{2}u+2R_*}  \left[-\frac{1}{4}DD'\psi(u,v')+8MI_0(v'-u-1)^{-2}v'^{-1}(u+1)^{-1}\right]\,dv'
\end{split}
\end{equation*}
and
\begin{equation*}
\begin{split}
\int_{2R_*+u}^{\frac{3}{2}u+2R_*} &\partial_u\partial_v\phi(u,v')\,dv'-\int_{\frac{u}{2}}^u \partial_u\partial_v\phi(u',v_R(u))\,du'\\
=&\:\int_{2R_*+u}^{\frac{3}{2}u+2R_*} -8MI_0(v'-u-1)^{-2}v'^{-1}(u+1)^{-1}\,dv'\\
&-\int_{\frac{u}{2}}^u -8MI_0(v_R(u)-u'-1)^{-2}v_R(u)^{-1}(u'+1)^{-1}\,du'\\
&+\int_{2R_*+u}^{2R_*+\frac{3}{2}u} \left[-\frac{1}{4}DD'\psi(u,v')+8MI_0(v'-u-1)^{-2}v'^{-1}(u+1)^{-1}\right]\,dv'\\
&-\int_{\frac{u}{2}}^u \left[-\frac{1}{4}DD'\psi(u',v_R(u))+8MI_0(v_R(u)-u'-1)^{-2}v_R(u)^{-1}(u'+1)^{-1}\right]\,du'.
\end{split}
\end{equation*}

By \eqref{eq:loasymppsi}  we have that
\begin{equation*}
\begin{split}
\Bigg| &-\frac{1}{4}DD'\psi+8MI_0(v-u-1)^{-2}v^{-1}(u+1)^{-1}\Bigg|\\
\leq&\: C(I_0+P+\sqrt{E})\left[v^{-1}(u+1)^{-1-\epsilon}(v-u-1)^{-2}+v^{-1}(u+1)^{-1}(v-u-1)^{-3+\max\{\eta,1-\beta\}}\right].
\end{split}
\end{equation*}
and therefore we can estimate
\begin{equation*}
\begin{split}
\Bigg|&\int^{v}_{\frac{3}{2}u+2R_*}  \left[-\frac{1}{4}DD'\psi(u,v')+8MI_0(v'-u-1)^{-2}v'^{-1}(u+1)^{-1}\right]\,dv'\Bigg|\\
\leq&\: C(I_0+P+\sqrt{E})\left[(u+1)^{-2-\epsilon}(v-u-1)^{-1}+(u+1)^{-2}(v-u-1)^{-2+\max\{\eta,1-\beta\}}+(1+u)^{-3}\right].
\end{split}
\end{equation*}
Similarly, by \eqref{eq:loasympTpsi}  we have that
\begin{equation}
\label{eq:mainestTpsi}
\begin{split}
\Bigg| &T\left(-\frac{1}{4}DD'\psi+8MI_0(v-u-1)^{-2}v^{-1}(u+1)^{-1}\right)\Bigg|\\
\leq&\: C(I_0+P_T+\sqrt{E_T})[v^{-1}(u+1)^{-2-\epsilon}(v-u-1)^{-2}\\
&+(v^{-1}(u+1)^{-2}+v^{-2}(u+1)^{-1})(v-u-1)^{-3+\max\{\eta,1-\beta\}}].
\end{split}
\end{equation}
Observe that
\begin{equation}
\label{eq:spacetimeint}
\begin{split}
\int_{2R_*+u}^{2R_*+\frac{3}{2}u}& f(u,v')\,dv'-\int_{\frac{1}{2}u}^u f(u',v_R(u))\,du'\\
=&\:2\int_{R_*}^{\frac{u}{4}+R_*}f(t',r_*')|_{t=u+r_*'}\,dr_*'-\int_{R_*}^{\frac{u}{4}+R_*}f(t',r_*')|_{t=u+2R_*-r_*'}\,dr_*'\\
=&\:2\int_{R_*}^{\frac{u}{4}+R_*}\int^{u+r_*}_{u+2R_*-r_*}T(f)(t',r'_*)\,dt'dr'_*,
\end{split}
\end{equation}
see Figure \ref{fig:rgeqR2}  below.
\begin{figure}[h!]
\begin{center}
\includegraphics[width=3in]{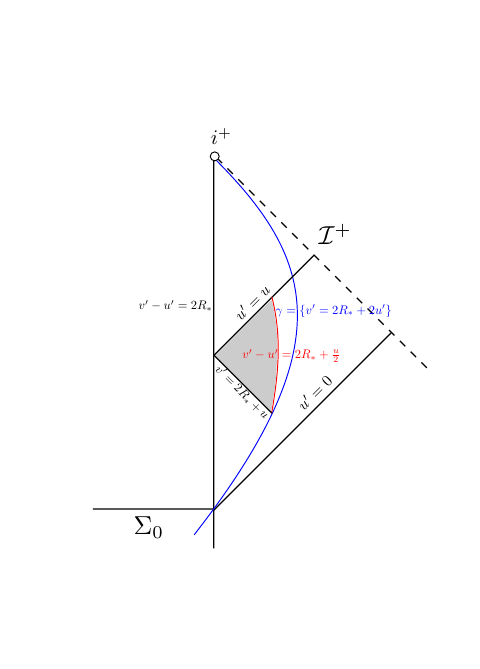}
\caption{\label{fig:rgeqR2} The shaded region depicts the integration region appearing on the very left-hand side of \eqref{eq:spacetimeint}.}
\end{center}
\end{figure}

We can perform a change of variables:
\begin{equation*}
\begin{split}
2\int_{R_*}^{\frac{u}{4}+R_*}\int^{u+r_*}_{u+2R_*-r_*} T(f)(t'-r_*',t'+r_*')\,dt'dr_*'=&\:\int_{2R_*+u}^{2R_*+\frac{3}{2}u}\int^{u}_{v'-2R_*-\frac{u}{2}} T(f)(u',v')\,du'dv'.
\end{split}
\end{equation*}

We now choose
\begin {equation*}
f(u',v')=-\frac{1}{4}DD'\psi(u',v')+8MI_0(v'-u'-1)^{-2}v'^{-1}(u'+1)^{-1}.
\end{equation*}

and apply the above integral equalities to obtain
\begin{equation*}
\begin{split}
\Bigg|\int_{2R_*+u}^{2R_*+\frac{3}{2}u}& \left[-\frac{1}{4}DD'\psi(u,v')+8MI_0(v'-u-1)^{-2}v'^{-1}(u+1)^{-1}\right]\,dv'\\
&-\int_{\frac{u}{2}}^u \left[-\frac{1}{4}DD'\psi(u',v_R(u))+8MI_0(v_R(u)-u'-1)^{-2}v_R(u)^{-1}(u'+1)^{-1}\right]\,du'\Bigg|\\
=&\:\left|\int_{2R_*+u}^{2R_*+\frac{3}{2}u}\int^{u}_{v'-2R_*-\frac{u}{2}} T\left(-\frac{1}{4}DD'\psi+8MI_0(v-u-1)^{-2}v^{-1}(u+1)^{-1}\right)(u',v')\,du'dv'\right|\\
\leq&\:C(I_0+P_T+\sqrt{E_T})\int_{2R_*+u}^{2R_*+\frac{3}{2}u}\int^{u}_{v'-2R_*-\frac{u}{2}}v'^{-1}(u'+1)^{-2-\epsilon}(v'-u'-1)^{-2}\,du'dv'\\
&+C(I_0+P_T+\sqrt{E_T})\int_{2R_*+u}^{2R_*+\frac{3}{2}u}\int^{u}_{v'-2R_*-\frac{u}{2}}(v'^{-1}(u'+1)^{-2}\\
&+v'^{-2}(u'+1)^{-1})(v'-u'-1)^{-3+\max\{\eta,1-\beta\}}\,du'dv'\\
\leq&\:C(I_0+P_T+\sqrt{E_T})\int_{2R_*+u}^{2R_*+\frac{3}{2}u}(v'+1)^{-3-\epsilon}\int^{u}_{v'-2R_*-\frac{u}{2}}(v'-u'-1)^{-2}\,du'dv'\\
&+C(I_0+P_T+\sqrt{E_T})\\
&\cdot \int_{2R_*+u}^{2R_*+\frac{3}{2}u}(v'+1)^{-3}\int^{u}_{v'-2R_*-\frac{u}{2}}(v'-u'-1)^{-3+\max\{\eta,1-\beta\}}\,du'dv',
\end{split}
\end{equation*}
where we applied \eqref{eq:mainestTpsi} in the first inequality and used that $v'^{-1}\geq (2R_*+\frac{3}{2}u)^{-1}$ to obtain the second inequality.

Hence, it follows easily that
\begin{equation*}
\begin{split}
\Bigg|\int_{2R_*+u}^{2R_*+\frac{3}{2}u}& \left[-\frac{1}{4}DD'\psi(u,v')+8MI_0(v'-u-1)^{-2}v'^{-1}(u+1)^{-1}\right]\,dv'\\
&-\int_{\frac{u}{2}}^u \left[-\frac{1}{4}DD'\psi(u',v_R(u))+8MI_0(v_R(u)-u'-1)^{-2}v_R(u)^{-1}(u'+1)^{-1}\right]\,du'\Bigg|\\
\leq&\:C(I_0+P_T+\sqrt{E_T})(u+1)^{-3}.
\end{split}
\end{equation*}

We are left with computing the integrals
\begin{equation*}
J_5(u,v)=:\int_{2R_*+u}^{v} -8MI_0(v'-u-1)^{-2}v'^{-1}(u+1)^{-1}\,dv'
\end{equation*}
and
\begin{equation*}
J_6(u)=:\int_{\frac{u}{2}}^u 8MI_0(v_R(u)-u'-1)^{-2}v_R(u)^{-1}(u'+1)^{-1}\,du'.
\end{equation*}

Let us introduce the functions:
\begin{align*}
G_1(v;x)=&\:x^{-2}\log\left(\frac{v}{v-x}\right)-x^{-1}(v-x)^{-1},\\
G_2(x;c)=&\:c^{-2}\log\left(\frac{x}{c-x}\right)+c^{-1}(c-x)^{-1}.
\end{align*}
Note that
\begin{align*}
\frac{dG_1}{dv}(v;x)=&\:v^{-1}(v-x)^{-2},\\
\frac{dG_2}{dx}(x;c)=&\: x^{-1}(c-x)^{-2}.
\end{align*}

Hence,
\begin{align*}
J_5(u,v)=&-8MI_0(u+1)^{-1}G_1(v';u+1)\Big|^{v'=v}_{v'=2R_*+u}\\
=&-8MI_0(u+1)^{-2}\left[(2R_*-1)-(u+1)^{-1}\log(u+1)\right]+O((u+1)^{-3})\\
&-8MI_0(u+1)^{-3}\log\left(\frac{v}{v-u-1}\right)+8MI_0(u+1)^{-2}(v-u-1)^{-1},\\
J_6(u)=&\:8MI_0(u+2R_*)^{-1}G_2(x;u+2R_*)\Big|^{x=u+1}_{x=\frac{u}{2}+1}\\
=&\:8MI_0(u+2R_*)^{-1}\left[(u+2R_*)^{-1}(2R_*-1)+(u+1)^{-2}\log(u+1)\right]+O((u+1)^{-3})
\end{align*}
and we can conclude that
\begin{equation*}
\begin{split}
J_5(u)+J_6(u)=&\:16MI_0(u+1)^{-3}\log(u+1)-8MI_0(u+1)^{-3}\log\left(\frac{v}{v-u-1}\right)\\
&+8MI_0(u+1)^{-2}(v-u-1)^{-1}+O((u+1)^{-3}).
\end{split}
\end{equation*}
By combining the above estimates and using that $T\phi(u,v_R(u))\leq C_R(I_0+P_T+\sqrt{E_T})(u+1)^{-3}$, we therefore obtain:
\begin{equation}
\begin{split}
\label{eq:auxasymtphi}
\Big|&T\phi(u,v)-\partial_v\phi(u,v)+\partial_v\phi\left(\frac{u}{2},u+2R_*\right)-16MI_0(u+1)^{-3}\log(u+1)\\
&+8MI_0(u+1)^{-3}\log\left(\frac{v}{v-u-1}\right)-8MI_0(u+1)^{-2}(v-u-1)^{-1}\Big|\\
\leq&\:C_R(I_0+P_T+\sqrt{E_T})(u+1)^{-3}.
\end{split}
\end{equation}
By \eqref{eq:2ndoasympdvphi} with $v$ replaced by $u+2R_*$ and $u$ replaced by $\frac{u}{2}$, we have that
\begin{equation*}
\left|\partial_v\phi\left(\frac{u}{2},u+2R_*\right)-2I_0(u+1)^{-2}-8MI_0(u+1)^{-3}\log(u+1)\right|\leq C_R(I_0+P+\sqrt{E})(u+1)^{-3},
\end{equation*}
so we can conclude after using once more \eqref{eq:2ndoasympdvphi}, together with \eqref{eq:auxasymtphi}:
\begin{equation*}
\begin{split}
\Bigg|&T\phi(u,v)+2I_0[(u+1)^{-2}-v^{-2}]-8MI_0(u+1)^{-3}\log(u+1)-16MI_0v^{-3}\log v\\
&+8MI_0(u+1)(v^{-3}-(u+1)^{-3})(v-u-1)^{-1}+8MI_0(u+1)^{-3}\log \left(\frac{v}{v-(u+1)}\right)\\
&+8MI_0v^{-3}\log \left(\frac{v(u+1)}{v-(u+1)}\right)\Bigg|\\
\leq&\: C_R\left(I_0+P_T+\sqrt{E_T}\right)(u+1)^{-3}.
\end{split}
\end{equation*}
\end{proof}

\subsection{Estimates for $r\psi$ in the case $I_0 [ \psi ] = 0$}

\begin{proposition}For all spherically symmetric solutions $\psi$ to the wave equation \eqref{we} on the $(\mathcal{M},g)$ backgrounds of Section \ref{preliminaries} with vanishing Newman--Penrose constant $I_0$  there  exists a constant $C_R>0$ such that we can estimate in $\{r\geq R\}\cap\{u\geq 0\}$:
\begin{equation}
\label{eq:2ndasympphiNP0}
\begin{split}
\Bigg|&\phi(u,v)+2I_0^{(1)}[(u+1)^{-2}-v^{-2}]-8MI_0^{(1)}(u+1)^{-3}\log(u+1)-16MI_0^{(1)}v^{-3}\log v\\
&+8MI_0^{(1)}(u+1)(v^{-3}-(u+1)^{-3})(v-u-1)^{-1}+8MI_0^{(1)}(u+1)^{-3}\log \left(\frac{v}{v-(u+1)}\right)\\
&+8MI_0^{(1)}v^{-3}\log \left(\frac{v(u+1)}{v-(u+1)}\right)\Bigg|\\
\leq&\: C_R\left(I_0^{(1)}+P_T[\psi^{(1)}]+\sqrt{E_T}[\psi^{(1)}]\right)(u+1)^{-3}.
\end{split}
\end{equation}
\label{propovic}
\end{proposition}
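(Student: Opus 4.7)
The plan is to reduce this statement to Proposition \ref{propot} by means of the time-integral construction. Since $I_0[\psi] = 0$ (and under the implicit decay assumption that makes $P_T[\psi^{(1)}]$, $E_T[\psi^{(1)}]$ finite), the time-integral $\psi^{(1)}$ defined in Section \ref{timeinverted} exists and is a regular spherically symmetric solution to the wave equation \eqref{we} satisfying $T\psi^{(1)} = \psi$, with Newman--Penrose constant $I_0[\psi^{(1)}] = I_0^{(1)}[\psi]$.

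The key observation is that $r$ depends on the double null coordinates only through $r^* = (v-u)/2$, so $Tr = (\partial_u + \partial_v)r = -D/2 + D/2 = 0$. Writing $\phi^{(1)} := r\psi^{(1)}$, this yields the identity
\[
T\phi^{(1)} = r\, T\psi^{(1)} = r\psi = \phi.
\]
Thus the left-hand side of the asserted estimate \eqref{eq:2ndasympphiNP0} is, term by term, the left-hand side of \eqref{eq:2ndasympTphi} applied to $\psi^{(1)}$, with $I_0$ replaced throughout by $I_0[\psi^{(1)}] = I_0^{(1)}[\psi]$.

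The proof then consists of invoking Proposition \ref{propot} with $\psi$ there replaced by $\psi^{(1)}$. The right-hand side becomes $C_R(I_0[\psi^{(1)}] + P_T[\psi^{(1)}] + \sqrt{E_T[\psi^{(1)}]})(u+1)^{-3}$, which is exactly the right-hand side of \eqref{eq:2ndasympphiNP0}. One only needs to check that $\psi^{(1)}$ satisfies the regularity and energy hypotheses underpinning Proposition \ref{propot}, which is exactly the content of Proposition 9.1 of \cite{paper2} together with the finiteness of $P_T[\psi^{(1)}]$ and $E_T[\psi^{(1)}]$ that is explicitly assumed on the right-hand side of the estimate.

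The main (and essentially only) subtlety is that Proposition \ref{propot} was stated under the assumption that the Newman--Penrose constant of the solution to which it is applied is non-vanishing, whereas $I_0^{(1)}[\psi]$ may in principle be zero. This is a cosmetic restriction rather than a real obstacle: inspecting its proof one sees that $I_0 \neq 0$ is never used, and the estimate degenerates consistently when $I_0^{(1)}[\psi] = 0$ (the logarithmic terms simply vanish and the statement reduces to the bound $|\phi(u,v)| \leq C_R(P_T[\psi^{(1)}] + \sqrt{E_T[\psi^{(1)}]})(u+1)^{-3}$). No further estimation is required, as all of the analytical work — the fundamental theorem of calculus in $u$ and $v$, the splitting across the curve $\gamma$, the integration of the leading-order asymptotic error produced by $DD'\psi^{(1)}$, and the explicit evaluation of the logarithmic integrals — has already been carried out in the proof of Proposition \ref{propot}.
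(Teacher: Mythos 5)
Your proposal is correct and follows essentially the same route as the paper: construct the time-integral $\psi^{(1)}$, verify it satisfies the hypotheses of Proposition \ref{propot} (in particular that \eqref{eq:rasyminitdata} holds with $I_0^{(1)}$ in place of $I_0$), apply that proposition with $\psi^{(1)}$ and $I_0^{(1)}$ replacing $\psi$ and $I_0$, and conclude via $T(r\psi^{(1)})=\phi$. Your additional remarks --- that $Tr=0$ justifies the identity $T(r\psi^{(1)})=rT\psi^{(1)}$, and that the hypothesis $I_0\neq 0$ in Proposition \ref{propot} is never actually used in its proof --- are accurate clarifications of points the paper leaves implicit.
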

\begin{proof}
In view of Section \ref{timeinverted} there is a unique time-integral $\psi^{(1)}$ associated to $\psi$. Given the assumption \eqref{dbehavior} we have from the results of \cite{paper2} that
\begin{equation*}
\partial_r(r\psi^{(1)})|_{\{u=0\}}=I_0^{(1)}r^{-2}+O(r^{-3}),
\end{equation*}
which implies that \eqref{eq:rasyminitdata} holds for $\psi^{(1)}$. Hence,  the estimate of Proposition \ref{propot} applies with $\psi^{(1)}$ replacing $\psi$ and $I^{(1)}_{0}$ replacing $I_0$. The results follows from the fact that $T(r\psi^{(1)})=\phi$.
\end{proof}

\section{Proof of the main theorems}\label{pmain}

We decompose $\psi$ as follows
\[\psi=\psi_0+\psi_{\geq 1},\]
where 
\[\psi_0=\frac{1}{4\pi}\int_{\mathbb{S}^2}\psi\,d\omega\]
and
\[\psi_{\geq 1}=\psi-\psi_0.\]

\subsection{Proof of Theorem \ref{thm2}}\label{pthm2}
\begin{proof}
In view of Proposition 4.3 of \cite{paper2} we have the following estimate for the radiation field of $\psi_{\geq 1}$:
\[ |r\psi_{\geq 1}|_{\mathcal{I}^{+}}(u,\cdot)\leq C\cdot \frac{\sqrt{E}}{(u+1)^{\frac{5}{2}-\epsilon}},\]
for some $\epsilon<0.25$. The result for the spherical mean $\psi_0$ is a corollary of Proposition \ref{prop:asymphi} after fixing $u$ and  taking the limit as $v\rightarrow \infty$ and observing that the limit of the expression 
\[2I_0v^{-1}-4MI_0v^{-2}\log(u+1)+8MI_0v^{-2}\log v+4MI_0[(u+1)^{-2}+v^{-2}]\log\left(\frac{v-u-1}{v}\right)\]
on the right hand side of \eqref{eq:asymphi} vanishes. 

The result follows by adding the estimates for $\psi_0$ and $\psi_{\geq 1}$.

\end{proof}
\subsection{Proof of Theorem \ref{thm3}}\label{pthm3}
\begin{proof}
In view of Proposition 4.3 of \cite{paper2} we have the following estimate for the radiation field of $T\psi_{\geq 1}$
\[ |T(r\psi)_{\geq 1}|_{\mathcal{I}^{+}}(u,\cdot)\leq C\cdot \frac{\sqrt{E}}{(u+1)^{\frac{7}{2}-\epsilon}},\]
for some $\epsilon<0.25$.

The result for  $T\psi_0$  a corollary of Proposition \ref{propot} after fixing $u$ and  taking the limit as $v\rightarrow \infty$ and observing that the limit of the expression
\begin{equation*}
\begin{split}
&-2I_0v^{-2}-16MI_0v^{-3}\log v+8MI_0(u+1)(v^{-3}-(u+1)^{-3})(v-u-1)^{-1}\\&+8MI_0(u+1)^{-3}\log \left(\frac{v}{v-(u+1)}\right)+8MI_0v^{-3}\log \left(\frac{v(u+1)}{v-(u+1)}\right)
\end{split}
\end{equation*}
on the right hand side of \eqref{eq:2ndasympTphi} vanishes.

The result follows by adding the estimates for $\psi_0$ and $\psi_{\geq 1}$.

\end{proof}

\subsection{Proof of Theorem \ref{thm1}}\label{pthm1}
\begin{proof} The compact support of the initial data guarantees the vanishing of the Newman--Penrose constant and consequently the existence of the associated time-integral $\psi^{(1)}$. Hence Proposition \ref{propovic} applies. The result follows after fixing $u$ and taking the limit as $v\rightarrow \infty$ of \eqref{eq:2ndasympphiNP0} and observing that the limit of the expression 
\begin{equation*}
\begin{split}
-&2I_0^{(1)}v^{-2}-16MI_0^{(1)}v^{-3}\log v+8MI_0^{(1)}(u+1)(v^{-3}-(u+1)^{-3})(v-u-1)^{-1}\\&+8MI_0^{(1)}(u+1)^{-3}\log \left(\frac{v}{v-(u+1)}\right)+8MI_0^{(1)}v^{-3}\log \left(\frac{v(u+1)}{v-(u+1)}\right)\
\end{split}
\end{equation*}
on the right hand side of \eqref{eq:2ndasympphiNP0} vanishes.

\end{proof}

\section{Acknowledgments}

The second author (S.A.) acknowledges support through NSF grant DMS-1265538, NSERC grant 502581, an Alfred P. Sloan Fellowship in Mathematics and the Connaught Fellowship 503071.

\appendix

\section{Energy norms}
\label{energynorms}
In this appendix we define the norms used in the Theorem \ref{thm:loasymp} and in the main theorems \ref{thm1}, \ref{thm2}, \ref{thm3}. 

We first define the following weighted $L^{\infty}$ norms in $(u,r)$ coordinates: 
\begin{equation}
\label{np1-der1}
 P [\psi] := \left\|Dr^{3}\cdot \left(\partial_r\phi_0-\frac{I_0 [\psi ]}{r^2}\right)\right\|_{L^{\infty}(\Sigma_{0})},
\end{equation}
and
\begin{equation}
\label{np1-der2}
 P_{T}  [\psi]:=\left\| Dr^4\cdot\partial_{r}\left( D\partial_{r}\phi_0-D\frac{I_{0}[\psi]}{r^{2}}  \right)  \right\|_{L^{\infty}(\Sigma_{0})} .
\end{equation}

Any suitably regular function $f$ admits the following decomposition in angular frequencies in $\mathcal{M}$
\begin{equation*}
f (u,v,\theta,\varphi)=\sum_{\ell'=0}^{\infty} \psi_{\ell=\ell'}(u,v,\theta,\varphi).
\end{equation*}
Also let $\Omega$  denote any of the three Killing vector fields $\Omega_i$, $i =1,2,3$ associated to the spherical symmetry of our spacetime given by
\begin{align*}
\Omega_1&=\sin \varphi \partial_{\theta}+\cot\theta \cos \varphi \partial_{\varphi},\\
\Omega_2&=-\cos \varphi \partial_{\theta}+\cot\theta \sin \varphi \partial_{\varphi},\\
\Omega_3&=\partial_{\varphi}.
\end{align*}
Let also $\Omega^{\alpha}$ denote any of the product of these vector fields:
\begin{equation*}
\Omega^{\alpha}=\Omega_1^{k_1}\Omega_2^{k_2}\Omega_3^{k_3},
\end{equation*}
where $|\alpha | =  k_1 + k_2 + k_3$ for $k_1 , k_2 , k_3 \in \mathbb{N}$.

The energy-momentum tensor is defined as follows
\begin{equation*}
\mathbf{T}_{\alpha \beta}[f]=\partial_{\alpha}f \partial_{\beta}f-\frac{1}{2}{g}_{\alpha \beta} (g^{-1})^{\kappa \lambda}\partial_{\kappa}f \partial_{\lambda} f,
\end{equation*}
and has the property that $\text{div} \mathbf{T} [ \psi ] = \Box_g\psi \cdot d\psi$. The energy current $J^V [f]$ is defined with respect to a function $f$ and two vector field $V_1$ and $V_2$ as
$$ J^{V_1} [f] \cdot V_2 := \mathbf{T} ( V_1 , V_2 ) . $$

We also recall the following definitions of \cite{paper1, paper2}:
\begin{align*}
\Phi &=r^2 \partial_r ( r\psi ) ,\\
\widetilde{\Phi} &=r(r-M) \partial_r (r\psi) ,\\
\Phi_{(2)} &=r^2 \partial_r \left( r^2 \partial_r (r\psi) \right) .
\end{align*}
In our case we decompose a linear wave as follows:
$$ \psi = \psi_0 + \psi_{\ell = 1} + \psi_{\ell \geq 2} . $$
The timelike vector field $N$ is as defined in Section 7.2 of \cite{paper2}.
Next we recall as well from \cite{paper2} the definition of the following energy norms
\begin{equation*}
\begin{split}
E^{\epsilon}_{0,I_0\neq0;k}[\psi]=&\:\sum_{l\leq 3+3k}\int_{\Sigma_{0}}J^N[T^l\psi]\cdot n_{0}\,d\mu_{\Sigma_0}\\
&+\sum_{l\leq 2k}\int_{\mathcal{N}_{0}} r^{3-\epsilon}(\partial_rT^l\phi)^2\,dr+r^{2}(\partial_rT^{l+1}\phi)^2+r(\partial_rT^{2+l}\phi)^2\,dr\\
&+\sum_{\substack{m\leq k\\ l\leq 2k-2m+\min\{k,1\}}} \int_{\mathcal{N}_{0}}r^{2+2m-\epsilon}(\partial_r^{1+m}T^{l}\phi)^2\, dr\\
&+\int_{\mathcal{N}_{0}}r^{3+2k-\epsilon}(\partial_r^{1+k}\phi)^2\, dr ,
\end{split}
\end{equation*}
and
\begin{equation*}
\widetilde{E}^{\epsilon}_{0,I_0\neq 0;k+1}[\psi]=E^{\epsilon}_{0,I_0\neq 0;k+1}[\psi]+\sum_{j=0}^k\int_{\Sigma_0}J^N[NT^j\psi]\cdot n_0\,d\mu_0,
\end{equation*}
for spherically symmetric linear waves $\psi$, the norms
\begin{equation*}
\begin{split}
E_{1;k}^{\epsilon}[\psi]\doteq &\:\sum_{\substack{l\leq 6+3k}}\int_{\Sigma_{0}}J^N[T^l\psi]\cdot n_{0}\;d\mu_{\Sigma_0}\\
&+\sum_{ l\leq 4+2k}\int_{\mathcal{N}_0}r^{2}(\partial_rT^l\phi)^2+r^{1}(\partial_rT^{1+l}\phi)^2\,d\omega dr\\
&+\sum_{l\leq 3,m\leq 2k}\int_{\mathcal{N}_{0}}r^{4-l-\epsilon}(\partial_rT^{l+m}\widetilde{\Phi})^2\,d\omega dr\\
&+\sum_{\substack{ m\leq \max\{k-1,0\}\\ l\leq k-2m+\min\{k,1\}}}\int_{\mathcal{N}_{0}}r^{4+2m-\epsilon}(\partial_r^{1+m}T^{l}\widetilde{\Phi})^2\,d\omega dr\\
&+\sum_{\substack{ m\leq k\\ l\leq 2k-2m+1}}\int_{\mathcal{N}_{0}} r^{3+2m-\epsilon}(\partial_r^{1+m}T^{l}\widetilde{\Phi})^2\,d\omega dr\\
&+\int_{\mathcal{N}_{0}} r^{4+2k-\epsilon}(\partial_r^{1+k}\widetilde{\Phi})^2\,d\omega dr , 
\end{split}
\end{equation*}
for linear waves localized at angular frequency $\ell = 1$, and the norms
\begin{equation*}
\begin{split}
E_{2;k}^{\epsilon}[\psi]\doteq &\:\sum_{\substack{|\alpha|\leq k\\ l+|\alpha|\leq 6+3k}}\int_{\Sigma_{0}}J^N[T^l\Omega^{\alpha}\psi]\cdot n_{0}\;d\mu_{\Sigma_0}\\
&+\sum_{ l\leq 4+2k}\int_{\mathcal{N}_0}r^{2}(\partial_rT^l\phi)^2+r^{1}(\partial_rT^{1+l}\phi)^2\,d\omega dr\\
&+\sum_{ l\leq 2k+2}\int_{\mathcal{N}_{0}} r^{2-\epsilon}(\partial_rT^{l}{\Phi})^2+r^{1-\epsilon}(\partial_rT^{l+1}{\Phi})^2\,d\omega dr\\
&+\sum_{\substack{|\alpha|\leq k\\l+|\alpha|\leq 2k}}\int_{\mathcal{N}_{0}} r^{2-\epsilon}(\partial_rT^{l}\Omega^{\alpha}{\Phi}_{(2)})^2+r^{1-\epsilon}(\partial_rT^{l+1}\Omega^{\alpha}{\Phi}_{(2)})^2\,d\omega dr\\
&+\sum_{\substack{|\alpha|\leq \max\{0,k-1\}\\ m\leq \max\{k-1,0\}\\ l+|\alpha|\leq k-2m+\min\{k,1\}}}\int_{\mathcal{N}_{0}}r^{2+2m-\epsilon}(\partial_r^{1+m}\Omega^{\alpha}T^{l}{\Phi}_{(2)})^2\,d\omega dr\\
&+\sum_{\substack{|\alpha|\leq \max\{0,k-1\}, m\leq k\\ l+|\alpha|\leq 2k-2m+1}}\int_{\mathcal{N}_{0}} r^{1+2m-\epsilon}(\partial_r^{1+m}\Omega^{\alpha}T^{l}{\Phi}_{(2)})^2\,d\omega dr\\
&+\int_{\mathcal{N}_{0}} r^{2+2k-\epsilon}(\partial_r^{1+k}{\Phi}_{(2)})^2\,d\omega dr.
\end{split}
\end{equation*}
for linear waves localized at angular frequencies $ \ell \geq 1$. 

Finally we define the norms used in Theorem \ref{thm:loasymp}:
\begin{align}\label{Epsi}
E [ \psi ] := \widetilde{E}^{\epsilon}_{0, I_0 \neq 0 ; 1} [\psi_0 ] + E_{1;1}^{\epsilon} [ \psi_{\ell = 1} ] + E_{2;1}^{\epsilon} [ \psi_{\ell \geq 2} ] , \\
E_T [\psi ] :=  \widetilde{E}^{\epsilon}_{0, I_0 \neq 0 ; 2} [\psi_0 ] + E_{1;2}^{\epsilon} [ \psi_{\ell = 1} ] + E_{2;2}^{\epsilon} [ \psi_{\ell \geq 2} ]  .
\end{align}


\begin{thebibliography}{100}

\bibitem{blukerr}
{\sc Andersson, L., and Blue, P.}
\newblock Hidden symmetries and decay for the wave equation on the {K}err
  spacetime.
\newblock {\em Annals of Math 182\/} (2015), 787--853.

\bibitem{paper1}
{\sc Angelopoulos, Y., Aretakis, S., and Gajic, D.}
\newblock A vector field approach to almost sharp decay for the wave equation
  on spherically symmetric, stationary spacetimes.
\newblock {\em arXiv:1612.01565\/} (2016).

\bibitem{paper2}
{\sc Angelopoulos, Y., Aretakis, S., and Gajic, D.}
\newblock Late-time asymptotics for the wave equation on spherically symmetric,
  stationary backgrounds.
\newblock {\em Advances in Mathematics 323\/} (2018), 529--621.

\bibitem{aretakis1}
{\sc Aretakis, S.}
\newblock Stability and instability of extreme {R}eissner--{N}ordstr\"om black
  hole spacetimes for linear scalar perturbations {I}.
\newblock {\em Commun. Math. Phys. 307\/} (2011), 17--63.

\bibitem{aretakis2}
{\sc Aretakis, S.}
\newblock Stability and instability of extreme {R}eissner--{N}ordstr\"om black
  hole spacetimes for linear scalar perturbations {II}.
\newblock {\em Ann. Henri Poincar\'{e} 12\/} (2011), 1491--1538.

\bibitem{aretakis3}
{\sc Aretakis, S.}
\newblock Decay of axisymmetric solutions of the wave equation on extreme
  {K}err backgrounds.
\newblock {\em J. Funct. Analysis 263\/} (2012), 2770--2831.

\bibitem{aretakis4}
{\sc Aretakis, S.}
\newblock Horizon instability of extremal black holes.
\newblock {\em Adv. Theor. Math. Phys. 19\/} (2015), 507--530.

\bibitem{aretakisglue}
{\sc Aretakis, S.}
\newblock The characteristic gluing problem and conservation laws for the wave
  equation on null hypersurfaces.
\newblock {\em Annals of PDE 3}, 1 (2017), 3.

\bibitem{baskin16}
{\sc Baskin, D.}
\newblock An explicit description of the radiation field in 3+1-dimensions.
\newblock {\em arXiv:1604.02984\/} (2016).

\bibitem{baskinw}
{\sc Baskin, D., Vasy, A., and Wunsch, J.}
\newblock Asymptotics of scalar waves on long-range asymptotically minkowski
  spaces.
\newblock {\em Advances in Mathematics 328\/} (2018), 160--216.

\bibitem{baskinwang}
{\sc Baskin, D., and Wang, F.}
\newblock Radiation fields on {S}chwarzschild spacetime.
\newblock {\em Communications in Mathematical Physics 331\/} (2014), 477--506.

\bibitem{zimmerman1}
{\sc Casals, M., Gralla, S.~E., and Zimmerman, P.}
\newblock Horizon instability of extremal {K}err black holes: Nonaxisymmetric
  modes and enhanced growth rate.
\newblock {\em Phys. Rev. D 94\/} (2016), 064003.

\bibitem{casott3}
{\sc Casals, M., Kavanagh, C., and Ottewill, A.~C.}
\newblock {High-order late-time tail in a Kerr spacetime}.
\newblock {\em Phys. Rev. D 94\/} (2016), 124053.

\bibitem{casott1}
{\sc Casals, M., and Ottewill, A.}
\newblock {Spectroscopy of the Schwarzschild Black Hole at Arbitrary
  Frequencies}.
\newblock {\em Phys. Rev. Lett. 109\/} (2012), 111101.

\bibitem{casott2}
{\sc Casals, M., and Ottewill, A.}
\newblock {High-order tail in Schwarzschild spacetime}.
\newblock {\em Phys. Rev. D 92\/} (2015), 124055.

\bibitem{nopeeling}
{\sc Christodoulou, D.}
\newblock {\em The global initial value problem in general relativity}.
\newblock The Ninth Marcel Grossmann Meeting (Rome 200), ed. by V. G. Gurzadyan
  et al., World Scientific, Singapore 2002, 44-54.

\bibitem{Dafermos2016}
{\sc Dafermos, M., Holzegel, G., and Rodnianski, I.}
\newblock {The linear stability of the Schwarzschild solution to gravitational
  perturbations}.
\newblock {\em {arXiv:1601.06467}\/} (2016).

\bibitem{dhr-teukolsky-kerr}
{\sc Dafermos, M., Holzegel, G., and Rodnianski, I.}
\newblock Boundedness and decay for the {T}eukolsky equation on {K}err
  spacetimes {I}: the case $|a| \ll m$.
\newblock {\em arXiv:1711.07944\/} (2017).

\bibitem{MDIR05}
{\sc Dafermos, M., and Rodnianski, I.}
\newblock A proof of {P}rice's law for the collapse of a self-gravitating
  scalar field.
\newblock {\em Invent. Math. 162\/} (2005), 381--457.

\bibitem{lecturesMD}
{\sc Dafermos, M., and Rodnianski, I.}
\newblock Lectures on black holes and linear waves.
\newblock {\em in Evolution equations, {C}lay {M}athematics {P}roceedings,
  {V}ol. 17, Amer. Math. Soc., Providence, RI,\/} (2013), 97--205,
  ar{X}iv:0811.0354.

\bibitem{part3}
{\sc Dafermos, M., Rodnianski, I., and Shlapentokh-Rothman, Y.}
\newblock Decay for solutions of the wave equation on {K}err exterior
  spacetimes {III: The full subextremal case} $|a| < m$.
\newblock {\em Annals of Math 183\/} (2016), 787--913.

\bibitem{dssprice}
{\sc Donninger, R., Schlag, W., and Soffer, A.}
\newblock A proof of {P}rice's law on {S}chwarzschild black hole manifolds for
  all angular momenta.
\newblock {\em Adv. Math. 226\/} (2011), 484--540.

\bibitem{other1}
{\sc Donninger, R., Schlag, W., and Soffer, A.}
\newblock On pointwise decay of linear waves on a {S}chwarzschild black hole
  background.
\newblock {\em Comm. Math. Phys. 309\/} (2012), 51--86.

\bibitem{semyon1}
{\sc Dyatlov, S.}
\newblock Quasi-normal modes and exponential energy decay for the {K}err--de
  {S}itter black hole.
\newblock {\em Comm. Math. Phys. 306\/} (2011), 119--163.

\bibitem{gajic}
{\sc Gajic, D.}
\newblock {Linear waves in the interior of extremal black holes {I}}.
\newblock {\em Comm. Math. Phys. 353\/} (2017), 717--770.

\bibitem{gajic2}
{\sc Gajic, D.}
\newblock Linear waves in the interior of extremal black holes {II}.
\newblock {\em Annales Henri Poincar\'{e} 18\/} (2017), {4005--4081}.

\bibitem{godazgar17}
{\sc Godazgar, H., Godazgar, M., and Pope, C.~N.}
\newblock Aretakis charges and asymptotic null infinity.
\newblock {\em {Phys. Rev. D} 96\/} (2017), 084055.

\bibitem{Gomez1994}
{\sc {G\'{o}mez}, R., Winicour, J., and Schmidt, B.}
\newblock Newman-{P}enrose constants and the tails of self-gravitating waves.
\newblock {\em Phys. Rev. D 49\/} (1994), 2828--2836.

\bibitem{CGRPJP94b}
{\sc Gundlach, C., Price, R., and Pullin, J.}
\newblock Late-time behavior of stellar collapse and explosions. {I}:
  {L}inearized perturbations.
\newblock {\em Phys. Rev. D 49\/} (1994), 883--889.

\bibitem{peter2}
{\sc Hintz, P.}
\newblock Global well-posedness of quasilinear wave equations on asymptotically
  de {S}itter spaces.
\newblock {\em Annales de l'institut Fourier 66}, 4 (2016), 1285--2408.

\bibitem{klainerman17}
{\sc Klainerman, S., and Szeftel, J.}
\newblock Global nonlinear stability of {S}chwarzschild spacetime under
  polarized perturbations.
\newblock {\em arXiv:1711.07597\/} (2017).

\bibitem{kro}
{\sc Kronthaler, J.}
\newblock Decay rates for spherical scalar waves in a {S}chwarzschild geometry.
\newblock {\em arXiv:0709.3703\/} (2007).

\bibitem{leaver}
{\sc Leaver, E.~W.}
\newblock Spectral decomposition of the perturbation response of the
  schwarzschild geometry.
\newblock {\em Phys. Rev. D 34\/} (1986), 384--408.

\bibitem{l16}
{\sc Lindblad, H.}
\newblock On the asymptotic behavior of solutions to einstein's vacuum
  equations in wave coordinates.
\newblock {\em arXiv:1606.01591\/} (2016).

\bibitem{hm2012}
{\sc Lucietti, J., Murata, K., Reall, H.~S., and Tanahashi, N.}
\newblock On the horizon instability of an extreme {R}eissner--{N}ordstr\"om
  black hole.
\newblock {\em JHEP 1303\/} (2013), 035.

\bibitem{hj2012}
{\sc Lucietti, J., and Reall, H.~S.}
\newblock Gravitational instability of an extreme {K}err black hole.
\newblock {\em Phys. Rev. D 86\/} (2012), 104030.

\bibitem{luk2015}
{\sc Luk, J., and Oh, S.-J.}
\newblock {Proof of linear instability of the Reissner-Nordstr\"{o}m Cauchy
  horizon under scalar perturbations}.
\newblock {\em Duke Math. J. 166}, 3 (2017), 437--493.

\bibitem{metal}
{\sc Metcalfe, J., Tataru, D., and Tohaneanu, M.}
\newblock Price's law on nonstationary spacetimes.
\newblock {\em Advances in Mathematics 230\/} (2012), 995--1028.

\bibitem{moschidis1}
{\sc Moschidis, G.}
\newblock The $r^{p}$-weighted energy method of {D}afermos and {R}odnianski in
  general asymptotically flat spacetimes and applications.
\newblock {\em Annals of PDE 2:6\/} (2016).

\bibitem{harvey2013}
{\sc Murata, K., Reall, H.~S., and Tanahashi, N.}
\newblock What happens at the horizon(s) of an extreme black hole?
\newblock {\em Class. Quantum Grav. 30\/} (2013), 235007.

\bibitem{np2}
{\sc Newman, E.~T., and Penrose, R.}
\newblock New conservation laws for zero rest mass fields in asympotically flat
  space-time.
\newblock {\em Proc. R. Soc. A 305\/} (1968), 175204.

\bibitem{price1972}
{\sc Price, R.}
\newblock Non-spherical perturbations of relativistic gravitational collapse.
  {I}. {S}calar and gravitational perturbations.
\newblock {\em Phys. Rev. D 3\/} (1972), 2419--2438.

\bibitem{volker1}
{\sc Schlue, V.}
\newblock Decay of linear waves on higher-dimensional {S}chwarzschild black
  holes.
\newblock {\em Analysis and PDE 6}, 3 (2013), 515--600.

\bibitem{tataru3}
{\sc Tataru, D.}
\newblock Local decay of waves on asymptotically flat stationary space-times.
\newblock {\em American Journal of Mathematics 135\/} (2013), 361--401.

\end{thebibliography}

\end{document}